\theoremstyle{definition}
\newtheorem{exmp}{Example}
\theoremstyle{empty}
\newif\ifShowThings
\newcommand{\boldparagraph}[1]{\vskip 0.05in\noindent\textbf{#1.}}
\newcommand{\fsig}[2]{:#1 \rightarrow #2}
\newcommand{\reduce}[2]{#1|_{#2}}
\newcommand{\unit}{\vdash_{1}}
\newcommand{\model}{\mathcal{M}}
\newcommand{\modelAssign}{\model_A}
\newcommand{\modelUnder}{\modelAssign^-}
\newcommand{\modelOver}{\modelAssign^+}
\newcommand{\modelX}{\model_X}
\newcommand{\pdef}[1]{\Sigma_{#1}}
\newcommand{\hpdef}[1]{\Sigma_{#1}^{h}}
\newcommand{\ubpdef}[1]{\Sigma_{#1}^{h\uparrow}}
\newcommand{\monopdef}[1]{\Sigma_{#1}^{+}}
\newcommand{\atom}[1]{\mathbf{#1}}
\newcommand{\gtatom}[2]{\atom{GT}({#1},{#2})_{k}}
\newcommand{\sumgtatom}[2]{\atom{SGT}({#1},{#2})}
\newcommand{\eatom}[2]{\atom{e}_{#1 \rightarrow #2}}
\newcommand{\mfatom}{\atom{MF}^{t}_{s}(E, \capf, \Vec{z})}
\newcommand{\capf}{\Vec{Cap}}
\newcommand{\capatom}[1]{\Vec{cap}_{#1}}
\newcommand{\incut}[1]{\textit{incut}_{#1}}
\newcommand{\bvand}{\wedge}
\newcommand{\reachatom}{\atom{reach^v_u}}
\newcommand{\streachatom}[1]{\atom{reach^t_s}(#1)}
\newcommand{\flowatom}[1]{\Vec{\textit{flow}}_{#1}}
\title{DRAT Proofs of Unsatisfiability for \\ SAT Modulo Monotonic Theories}
\author{
Nick Feng\inst{1} \and
Alan J. Hu\inst{2} \and
Sam Bayless\inst{3} \and
Syed M. Iqbal\inst{3} \and 
Patrick Trentin\inst{3} \and 
Mike Whalen\inst{3} \and
Lee Pike\inst{3} \and 
John Backes\inst{3}
}
\authorrunning{N.~Feng et al.} 
\institute{
(Corresponding Author)
Dept.\ of Computer Science, University of Toronto, Canada \\
\email{fengnick@cs.toronto.edu}\\ 
\and
Dept.\ of Computer Science, University of British Columbia, Vancouver, Canada \\
\email{ajh@cs.ubc.ca}
\and Amazon Web Services, \{Seattle,Minneapolis,Portland\},USA \\
\email{\{sabayles,iqsye,trentinp,mww,leepike,jbackes\}@amazon.com}
}
\begin{document}	


\maketitle              

\begin{abstract}
Generating proofs of unsatisfiability is
a valuable capability of most SAT solvers, and is an active area of
research for SMT solvers.
This paper introduces the first method to efficiently generate
proofs of unsatisfiability specifically for
an important subset of SMT: SAT Modulo \textit{Monotonic} Theories
(SMMT), 
which includes many useful finite-domain theories (e.g.,
bit vectors and many graph-theoretic properties) and is used
in production at Amazon Web Services.
Our method uses propositional definitions of the theory predicates,
from which it generates compact Horn approximations of the definitions,
which lead to
efficient DRAT proofs, leveraging the large investment
the SAT community has made in DRAT.
In experiments on practical SMMT problems, our
proof generation overhead is minimal (7.41\% geometric mean slowdown,
28.8\% worst-case), and we can generate and check proofs for many
problems that were previously intractable.

\end{abstract}

\begin{ProceedingsVersion}
\noindent
\textit{An extended version of this paper, which includes appendices with
proofs and additional results, is available at \url{https://doi.org/10.48550/arXiv.2401.10703} }
\end{ProceedingsVersion}
\begin{ExtendedVersion}
\noindent
\textit{This is an extended version of a paper published in TACAS~2024.
It is essentially identical to that paper, except that
it includes appendices with proofs and additional results.
However,
it does not incorporate changes (if any) made by the publisher, and
is not the ``Version of Record''.
The Version of Record is available at \url{https://doi.org/10.1007/978-3-031-57246-3_1}.
Please cite that paper instead of this one, unless you are referencing
specific details in the appendices.}
\end{ExtendedVersion}

\section{Introduction}

This paper introduces the first method to efficiently generate and check
proofs of unsatisfiability for SAT Modulo Monotonic Theories (SMMT),
an important fragment of general SMT.
The motivation for this work rests on these
premises:
\begin{itemize}
\item
\textit{Proofs of UNSAT are valuable, for propositional SAT as well as SMT.}
Obviously, an independently checkable proof increases trust,
which is important because
an incorrect UNSAT result
can result in certifying correctness of an incorrect system.
Additionally, proofs are useful for
computing abstractions~\cite{DBLP:conf/cav/McMillan03,DBLP:conf/spin/ChristHN12,DBLP:conf/fmcad/GurfinkelV14}
via interpolation in many application domains including
model checking~\cite{DBLP:conf/cav/McMillan03} and
software analysis~\cite{DBLP:conf/tacas/LuckowDGHIKRR16,DBLP:journals/jar/GieslABEFFHOPSS17}.
\item
\textit{SMMT is a worthy fragment of SMT as a research target.}
SMMT~\cite{bayless2015sat} is a technique for efficiently supporting
finite, monotonic theories in SMT solvers.  E.g., reachability in
a graph is monotonic in the sense that adding edges to the graph
only increases reachability, and an example SMMT query would be whether
there exists a configuration of edges such that node $a$ can reach node $b$,
but node $c$ can't reach node $d$.
(More formal background on SMMT is in Sec.~\ref{subsec:SMMT}.)
The most used SMMT theories are
graph reachability and max-flow, along with
bit-vector addition and comparison.
Applications include circuit escape routing~\cite{DBLP:conf/iccad/BaylessHH16},
CTL synthesis~\cite{10.1007/978-3-319-41528-4_8},
virtual data center allocation~\cite{10.1016/j.artint.2019.103196}, and
cloud network security and debugging~\cite{backes2019reachability,bayless2021debugging},
with the last two
applications being deployed in production by Amazon Web Services (AWS).
Indeed, our research was specifically driven by industrial demand.
\item
\textit{DRAT is a desirable proof format.}
(Here, we include related formats like
DRUP~\cite{heule2013trimming},
GRIT~\cite{10.1007/978-3-662-54577-5_7}, and
LRAT~\cite{DBLP:conf/cade/Cruz-FilipeHHKS17}.
DRAT is explained in Sec.~\ref{subsec:DRAT}.)
For an independent assurance of correctness, the proof \textit{checker} is
the critical, trusted component, and hence must be as trustworthy as possible.
For (propositional) SAT, the community has coalesced around the DRAT
proof format~\cite{DBLP:conf/sat/WetzlerHH14}, for which there exist
independent, efficient proof checkers~\cite{DBLP:conf/sat/WetzlerHH14},
mechanically verified proof checkers~\cite{10.1007/978-3-642-39634-2_18},
and even combinations that are fast as well as mechanically
proven~\cite{DBLP:conf/cade/Cruz-FilipeHHKS17}.
The ability to emit DRAT proof certificates has been required for solvers
in the annual SAT Competition since 2014.

Unfortunately, DRAT is propositional, so general SMT solvers need additional
mechanisms to handle theory reasoning~\cite{barrett2015proofs}.
For example, Z3~\cite{DBLP:conf/tacas/MouraB08} outputs
natural-deduction-style proofs~\cite{DBLP:conf/lpar/MouraB08},
which can be reconstructed inside the interactive theorem prover
Isabelle/HOL~\cite{bohme2009proof,DBLP:conf/itp/BohmeW10}.  Similarly,
veriT~\cite{DBLP:conf/cade/BoutonODF09} produces resolution proof
traces with theory lemmas, and supports
proof reconstruction in both Coq~\cite{DBLP:conf/cpp/ArmandFGKTW11} and
Isabelle~\cite{DBLP:journals/corr/abs-1908-09480,DBLP:conf/cade/BarbosaBF17,barbosa2019better}.
As a more general approach, CVC4~\cite{DBLP:conf/cav/BarrettCDHJKRT11}
produces proofs in the LFSC
format~\cite{DBLP:journals/fmsd/StumpORHT13}, which is a meta-logic that
allows describing theory-specific proof rules for different SMT theories.
Nevertheless,
given the virtues of DRAT, SMT solvers have started to harness it
for the propositional reasoning,
e.g.,
CVC4 supports DRAT proofs for bit-blasting of the bit-vector theory, which are then
translated into LFSC~\cite{DBLP:conf/sat/OzdemirNPZB19},
and
Otoni~et~al.~\cite{DBLP:conf/dac/OtoniBEHS21} propose a DRAT-based
proof certificate format for propositional reasoning
that they extend with theory-specific certificates.
However, in both cases, the final proof certificate is not purely DRAT,
and any theory lemmas must be checked by theory-specific certificate checkers.
\item
\textit{For typical finite-domain theories, defining theory predicates
propositionally is relatively straightforward.}
The skills to design and implement theory-specific proof systems are
specialized and not widely taught.  In contrast,
if we treat a theory predicate as simply a Boolean function,
then anyone with basic digital design skills can build a circuit
to compute the predicate (possibly using
readily available commercial tools) and then apply
the Tseitin transform to convert the circuit to CNF.
(This is known as ``bit-blasting'', but we will see later that
conventional bit-blasting is too inefficient for SMMT.)
\end{itemize}

From a practical, user-level perspective, the contribution of this
paper is the first efficient proof-generating method for SMMT.
Our method scales to industrial-size instances and generates pure
DRAT proofs.

\sloppypar
From a theoretical perspective, the following contributions underlie
our method:
\begin{itemize}
\item We introduce the notion of one-sided propositional definitions for
	refutation proof.  Having different definitions for a predicate
	vs.\ its complement allows for more compact and efficient
	constructions.
\item We show that SMMT theories expressed in Horn theory enable
linear-time (in the size of the Horn definition) theory lemma checking via
reverse unit propagation (RUP), and hence DRAT.
\item We propose an on-the-fly transformation that uses hints from
the SMMT solver to over-approximate any
CNF encoding of a monotonic theory predicate into a
linear-size Horn upper-bound, and prove that the Horn upper-bound
is sufficient for checking theory lemmas in any given proof via RUP.
\item We present efficient, practical propositional definitions
for the main monotonic theories used in practice:  bit-vector summation and
comparison, and reachability and max-flow on symbolic graphs.
\end{itemize}
(As an additional minor contribution, we adapt the BackwardCheck procedure
from DRAT-Trim~\cite{heule2013trimming} for use with SMT,
and evaluate its effectiveness in our proof checker.)

We implemented our method in the MonoSAT SMMT solver~\cite{DBLP:conf/aaai/BaylessBHH15}.
For evaluation, we use
two sets of benchmarks derived from practical, industrial problems:
multilayer escape routing~\cite{DBLP:conf/iccad/BaylessHH16},
and cloud network reachability~\cite{backes2019reachability}.\footnote{
Available at \url{https://github.com/NickF0211/MonoProof}.}
Our results show
minimal runtime overhead
on the solver (geometric mean slowdown 7.4\%,
worst-case 28.8\% in our experiments), 
and we generate and check proofs
for many problem instances that are otherwise intractable.

\section{Background} \label{sec:background}

\subsection{Propositional SAT and DRAT}
	\label{subsec:DRAT}

We assume the reader is familiar with standard propositional satisfiability
on CNF.
Some notational conventions in our paper are:
we use lowercase letters for literals and uppercase letters
for clauses (or other sets of literals);
for a literal $x$, we denote the variable of $x$ by $var(x)$;
we will interchangeably treat an \textit{assignment} either as a mapping
of variables to truth values $\top$ (true) or $\bot$ (false), or
as a set of non-conflicting
(i.e., does not contain both $x$ and its
complement $\bar{x}$) literals,
with positive (negative) literals for variables assigned $\top$ ($\bot$);
assignments can be 
\textit{total} (assigns truth values to every variable) or \textit{partial}
(some variables unassigned);
and given a formula $F$ and assignment $M$, we use the vertical bar
$\reduce{F}{M}$ to denote {reducing} the formula by the assignment, i.e.,
discarding falsified literals from clauses and satisfied clauses from the
formula.
(An empty clause denotes $\bot$; an empty formula, $\top$.) 

This paper focuses on proofs of unsatisfiability.
In proving a formula $F$ UNSAT,
a clause $C$ is \textit{redundant}
if $F$ and $F\wedge C$ are equisatisfiable~\cite{10.1007/s10817-019-09516-0}.
A proof of unsatisfiability is simply a sequence of
redundant clauses culminating in $\bot$, but where the redundancy
of each clause can be easily checked.
However, checking redundancy is coNP-hard.
A clause that is \textit{implied} by $F$, which we denote by $F\models C$,
is guaranteed redundant,
and we can check implication by checking the
unsatisfiability of $F\wedge \overline{C}$,
but this is still coNP-complete.
Hence, proofs use restricted proof rules that guarantee redundancy.
For example, the first automated proofs of UNSAT used resolution
to generate implied clauses, until implying $\bot$ by
resolving a literal $l$ with its complement
$\bar{l}$~\cite{DBLP:journals/jacm/DavisP60,DBLP:conf/date/ZhangM03}.
In practice, however, resolution proofs grow too large on industrial-scale
problems.

DRAT~\cite{DBLP:conf/sat/WetzlerHH14}
is a much more compact and efficient system for proving unsatisfiability.
It is based on \textit{reverse unit propagation} (RUP),
which we explain here.\footnote{
RUP is all we use in this paper.  RAT is a superset of RUP,
by essentially doing one step of resolution as a ``lookahead'' before
checking RUP of the resolvents.  The ``D'' in DRAT stands for ``deletion'',
meaning the proof format also records clause deletions.
}
A \textit{unit clause} is a clause containing one literal.
If $L$ is the set of literals appearing
in the unit clauses of a formula $F$,
the \textit{unit clause rule} computes $\reduce{F}{L}$,
and the repeated application of the unit clause rule until a fixpoint
is called \textit{unit propagation}
(aka \textit{Boolean constraint propagation}).
Given a clause $C$,
its negation $\overline{C}$ is a set of unit clauses,
and we denote by $F \unit C$ if $F \wedge \overline{C}$
derives a conflict through unit propagation.
Notice that $F\unit C$ implies $F\models C$, but is computationally
easy to check.
The key insight~\cite{10.5555/789083.1022836} behind RUP is that
modern CDCL SAT solvers make progress by deriving learned clauses,
whose redundancy is, by construction, checkable via unit propagation.
Proof generation, therefore, is essentially just logging the sequence of
learned clauses leading to $\bot$,
and proof checking is efficiently checking $\unit$ of the relevant
learned clauses.

\subsection{SAT Modular Monotonic Theories (SMMT)}
	\label{subsec:SMMT}

We define a Boolean \textit{positive monotonic predicate} as follows:
\begin{definition}[Positive Monotonic Predicate]\label{def:monof}
A predicate $p \fsig{\{0, 1\}^{n}}{\{0, 1\}}$ is positively monotonic with respect to the input $a_i$ iff
\[p(a_1, \ldots, a_{i-1}, 0, a_{i+1}, \ldots) \implies p(a_1, \ldots, a_{i-1}, 1, a_{i+1}, \ldots)\]
The predicate $p$ is a positive monotonic predicate iff $p$ is positively monotonic with respect to every input.
\end{definition}
\noindent
Negative monotonic predicates are defined analogously.  If a
predicate $p$ is positively monotonic w.r.t. some
inputs $A^+$ and  negatively monotonic w.r.t. the rest of inputs $A^-$, it
is always possible to rewrite the predicate as a positive monotonic
predicate $p'$ over input $A^+$ and $\{ \overline{a} \mid a \in A^-\}$. 
For ease of exposition, and without loss of generality, we will describe
our theoretical results assuming positive monotonic predicates only (except
where noted otherwise).

Given a monotonic predicate $p$ over input $A$,
we will use boldface $\atom{p}$ as the \textit{predicate atom} for $p$,
i.e., the predicate atom is a Boolean variable in the CNF encoding
of the theory, indicating whether $p(A)$ is true or not.
The \textit{theory of} $\atom{p}$
is the set of valid implications in the form of
$\modelAssign \Rightarrow \atom{p}$ where $\modelAssign$ is
a partial assignment over $A$. 

The following are the most used monotonic theories:
\begin{description}
\item[Graph Reachability:]
Given a graph $G = (V, E)$, where $V$ and $E$ are sets of vertices
and edges, the graph reachability theory contains the
reachability predicates $reach_{u}^{v}$ on the input variables $e_1,
e_2 \ldots e_m \in E$, where $u,v\in V$.
The predicate holds iff node $u$ can reach $v$
in the graph $G$ by using only the subset of edges whose corresponding
variable $e_i$ is true.  The predicate is positively
monotonic because enabling more edges will not make reachable nodes
unreachable, and disabling edge will not make unreachable nodes reachable.
\item[Bit-Vector Summation and Comparison:]
Given two bit-vectors (BV) $\vec{a}$ and $\vec{b}$, the
theory of BV comparison contains the predicate $\vec{a} \ge \vec{b}$,
whose inputs are the bits of $\vec{a}$ and $\vec{b}$.
The predicate holds iff the value (interpreted as an integer) of $\vec{a}$ is
greater or equal to the value of $\vec{b}$.
The predicate
is positively monotonic for the variables of $\vec{a}$ and negatively
monotonic for the variables of $\vec{b}$,
because changing any 0 to a 1 in $\vec{a}$ makes it bigger,
and changing any 1 to 0 in $\vec{b}$ makes it smaller.
Similarly, given two sets of BVs $\Vec{A}$ and $\Vec{B}$,
the theory of comparison between sums contains the predicate
$\sum{\Vec{A}} \ge \sum{\Vec{B}}$
whose inputs are the boolean variables from all BVs
in $\vec{A}$ and $\vec{B}$.
The predicate holds iff the sum of the BVs in $\vec{A}$ is greater or
equal to the sum of the BVs in $\vec{B}$,
and is positively monotonic in $\vec{A}$ and negatively monotonic in $\vec{B}$.
\item[S-T Max Flow]
Given a graph $G = (V, E)$,
for every edge $e\in E$,
let its \textit{capacity} be represented by
the BV $\vec{cap}_{e}$.
For two vertices $s, t\in V$, and a BV $\vec{z}$,
the max-flow theory contains the predicates $MF_{s}^{t} \ge
\vec{z}$ over the input variables $e_1, e_2 \ldots e_n \in E$ and
$\vec{cap}_{e_1}, \vec{cap}_{e_2} \ldots \vec{cap}_{e_n}$.
The
predicate holds iff the maximum flow from the source $s$ to the target
$t$ is greater or equal to $\vec{z}$, using only the enabled edges (as
in the reachability theory) with their specified capacities.
\end{description}

The SMMT Framework~\cite{DBLP:conf/aaai/BaylessBHH15} describes how
to extend a SAT or SMT
solver with Boolean monotonic theories.
The framework has
been implemented in the SMT solver MonoSAT, which has been deployed in
production by
Amazon Web Services to reason about a wide range of network
properties~\cite{backes2019reachability,bayless2021debugging}.
The framework performs theory propagation and clause learning for SMMT
theories as follows:
(In this description, we use
$P$ for the set of positive monotonic predicates, and $S$ for the set of Boolean
variables that are arguments to the predicates.)
\begin{description}
\item[Theory Propagation:]
Given a partial assignment $M$, let
$M_s$ be the partial assignment
over $S$.
The SMMT framework forms two
complete assignments of $M_s$: one with all unassigned $s$ atoms assigned
to false ($M_s^-$), one with all unassigned $s$ atoms assigned to true
($M_s^+$).  Since $M_s^-$ and $M_s^+$ are each complete assignments of
$S$, they can be used to determine the value of $P$ atoms.  Since every
$\atom{p} \in P$ is positively monotonic, (1) if $M_s^- \Rightarrow \atom{p}$, then
$M_s \Rightarrow \atom{p}$, and (2) if $M_s^+ \Rightarrow \neg \atom{p}$, then $M_s
\Rightarrow \neg \atom{p}$.  The framework uses $M_s^-$ and $M_s^+$ as the
under- and over-approximation for theory propagation over $P$ atoms.
Moreover, the framework attaches $M_s\Rightarrow \atom{p}$ or $M_s \Rightarrow
\neg \atom{p}$ as the reason clause for the theory propagation.
\item[Clause Learning:]
For some predicates, a witness can be efficiently generated during theory
propagation, as a sufficient condition to imply the predicate $p$.  For
example, in
graph reachability, suppose $M_s^- \Rightarrow \atom{reach_{u,v,G}}$
for a given under-approximation $M_s^-$.
Standard reachability algorithms can efficiently find a set
of edges $M_s' \subseteq M_s$ that forms a path from $u$
to $v$.  When such a witness is available, instead of learning
$M_s \Rightarrow \atom{p}$, the framework would use the path witness to learn
the stronger clause $M_s' \Rightarrow \atom{p}$.
Witness-based clause learning is theory specific (and implementation
specific); if a witness is not available or cannot be efficiently
generated in practice for a particular predicate, the framework will
learn the weaker clause $M_s \Rightarrow \atom{p}$.
\end{description}

\begin{figure}[tb]
    \centering
    \includegraphics[width=0.9\textwidth]{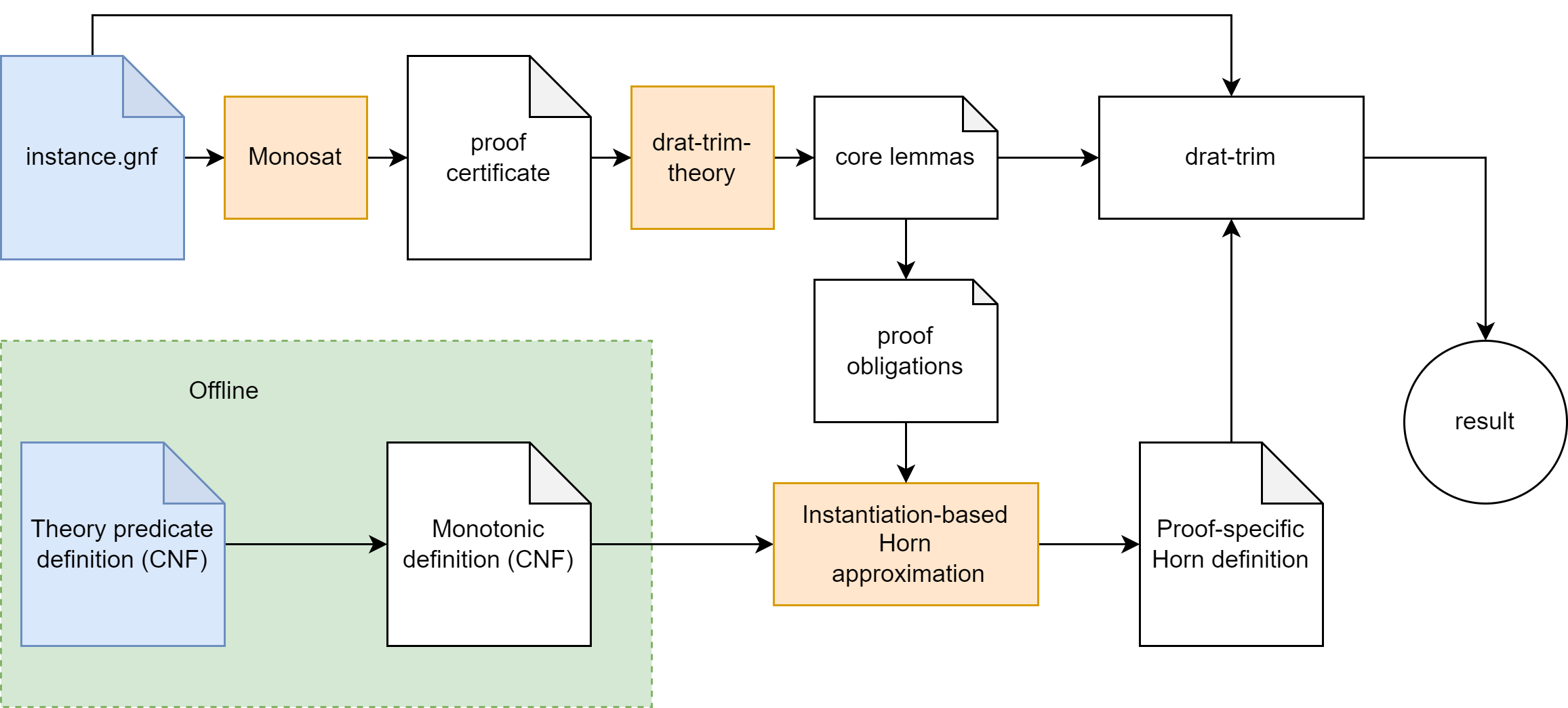}
    \caption{
Overview of Our Proof Generation and Checking Method.
Inputs (the problem instance file and the propositional definitions of
theory predicates) are colored blue;
new and modified components are colored orange.
Starting from the top-left is the SMMT problem instance, which is
solved by MonoSAT.
We extended MonoSAT to emit a DRAT-style proof certificate, consisting
of learned (via propositional or theory reasoning) clauses,
similar to what is proposed in~\cite{DBLP:conf/dac/OtoniBEHS21}.
The proof certificate is optionally pre-processed by
\textit{drat-trim-theory},
in which we modified the BackwardCheck procedure~\cite{heule2013trimming}
to perform a backward traversal from the final $\bot$,
outputting a subset of lemmas sufficient (combined with the original
clause database) to derive $\bot$.
This is extra work (since a full
BackwardCheck is later performed by unmodified \textit{drat-trim} for the
final proof verification at the top-right of the figure), but 
allows us to avoid verifying some theory lemmas
that are not relevant to the final proof.
The resulting core lemmas are split between the propositional learned
clauses, which go straight (right) to \textit{drat-trim}, and
the theory learned clauses, which are our proof obligations.
The heart of our method is the instantiation-based Horn approximation
(bottom-center, described in Sec.~\ref{sec:verifyTheory}).
In this step,
we use the proof obligations as hints to transform the
pre-defined, propositional theory definitions
(bottom-left, examples in Sec.~\ref{sec:buildEncoding})
into proof-specific Horn definitions.
The resulting proof-specific definitions together with the CNF
from the input instance can efficiently verify UNSAT
using unmodified drat-trim~\cite{DBLP:conf/sat/WetzlerHH14}.
\vspace*{-3ex}
    }
    \label{fig:workflow}
\end{figure}

\vspace*{-4ex}
\section{Overview of Our Method}
\vspace*{-1ex}
\label{sec:overview}

Most leading SMT solvers, including MonoSAT,
use the DPLL(T) framework~\cite{ganzinger2004dpll},
in which a CDCL propositional SAT solver
coordinates one or more theory-specific solvers.
A DPLL(T) solver behaves similarly to a CDCL propositional SAT solver
--- making decisions, performing unit propagation, analyzing conflicts,
learning conflict clauses ---
except that the theory solvers will also introduce new
clauses (i.e., theory lemmas) into the clause database,
which were derived via theory reasoning,
and whose correctness relies on the semantics of the underlying SMT theory.
These theory lemmas cannot (in general) be derived from the
initial clause database, and so cannot be verified using DRAT.
Therefore,
the problem of producing a proof of UNSAT in SMT reduces to the problem
of proving the theory lemmas. 

A direct approach would be to have the SMT solver emit a partial
DRAT proof certificate, in which each theory lemma is treated as
an axiom.
This partial proof is DRAT-checkable,
but each theory lemma becomes a new proof obligation.
The theory lemmas
could subsequently be verified using external (non-DRAT), trusted,
theory-specific proof-checking procedures.
This is the approach recently proposed
by Otoni~et~al.~\cite{DBLP:conf/dac/OtoniBEHS21}.

We take such an approach as a starting point, but instead of
theory-specific proof procedures, we use propositional definitions
of the theory semantics to add clauses sufficient to prove (by RUP)
the theory lemmas.  The resulting proof is purely DRAT,
checkable via standard DRAT checkers,
with no theory-specific proof rules.
Fig.~\ref{fig:workflow} explains our approach
 in more detail;
Sec.~\ref{sec:verifyTheory} dives into how we derive the
added clauses; and
Sec.~\ref{sec:buildEncoding} gives sample propositional theory definitions.

\section{Instantiation-Based Horn Approximation} \label{sec:verifyTheory}

This section describes how we derive a set of clauses sufficient
to make theory lemmas DRAT-checkable.
Section~\ref{sec:smmt_to_drat} introduces one-sided propositional
definitions and motivates the goal of a compact, Horn-clause-based
definition.
Section~\ref{sec:mono_prop} gives a translation from an arbitrary
propositional definition of a monotonic predicate to a
\textit{monotonic definition}, as an intermediate step toward constructing
the final proof-specific, Horn definition in Section~\ref{sec:Hornapprox}.

\begin{figure}[tbp]
    \centering
    \hspace*{-0.30in}
    \includegraphics[width=0.4\textwidth]{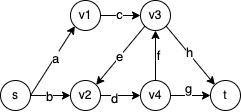}
    \includegraphics[width=0.4\textwidth]{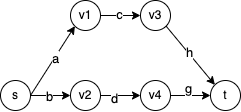}
    \includegraphics[width=0.15\textwidth]{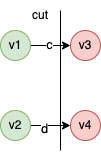}
    \caption{Directed Graph for Running Example in Sec.~\ref{sec:verifyTheory}.
    In the symbolic graph (left), the reachability predicate $reach_s^t$
    is a function of the edge inputs $a,\ldots,h$.
    \vspace{-0.2in}
    }
    \label{fig:graph}
\end{figure}

\subsection{One-Sided Propositional Definitions and Horn Clauses}\label{sec:smmt_to_drat}

\begin{definition}[Propositional Definition]\label{def:pdef}
Let $\atom{p}$ be the \textit{positive} predicate atom of predicate
$p$ over Boolean arguments $A$.
A \emph{propositional definition} of
$\atom{p}$, denoted as $\pdef{\atom{p}}$, is a CNF formula over variables
$V \supseteq (var(\atom{p}) \cup A)$ such that for every truth assignment
$\model$ to the variables in $A$, (1) $\reduce{\pdef{\atom{p}}}{\model}$
is satisfiable and (2) $\pdef{\atom{p}} \models (\model \Rightarrow
\atom{p})$ if and only if $p(\model)$ is $\top$. The
propositional definition of $\atom{\Bar{p}}$ is defined analogously.
\end{definition}

\noindent
For example, the Tseitin-encoding of a logic circuit that computes
$p(\model)$ satisfies this definition.
However, note that a propositional definition for $\atom{p}$ can be
one-sided: it is not required that
$\pdef{\atom{p}} \models (\model \Rightarrow \atom{\Bar{p}})$
when $p(\model)$ is $\bot$. That case is handled by a separate propositional
definition for $\atom{\Bar{p}}$.
We will see that  this one-sidedness gives some freedom to admit more
compact definitions.

Given a propositional definition $\pdef{\atom{p}}$,
any theory lemma $\modelAssign \Rightarrow \atom{p}$ is a logical
consequence of $\pdef{\atom{p}}$, but this might not be
RUP checkable.
One could prove $\pdef{\atom{p}} \models (\modelAssign\Rightarrow\atom{p})$
by calling a proof-generating SAT solver on
$\pdef{\atom{p}}\wedge \overline{\modelAssign\Rightarrow\atom{p}}$,
i.e., bit-blasting the specific lemma, but we will see experimentally
(in Sec.~\ref{sec:evaluation}) that this works poorly.
However,
if the propositional definition is limited to Horn theory
(i.e., each clause has at most one positive literal),
then every SMMT theory lemma \textit{can} be proven by unit propagation:
\begin{theorem}\label{thm:hornrup}
    Let $p$ be a positive 
    monotonic predicate over input $A$, and let $\hpdef{\atom{p}}$ be a propositional definition for the positive atom $\atom{p}$.
   If $\hpdef{\atom{p}}$ is set of
   Horn clauses, then for any theory lemma $\modelAssign \Rightarrow \atom{p}$ where $\modelAssign$ is a set of positive 
   atoms from $A$,
    $\hpdef{\atom{p}} \models (\modelAssign \Rightarrow \atom{p}$) if and only if  $\hpdef{\atom{p}} \unit (\modelAssign \Rightarrow \atom{p})$.
\end{theorem}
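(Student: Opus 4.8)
The plan is to prove the two directions of the biconditional separately, with essentially all of the content living in one of them. The direction $\hpdef{\atom{p}} \unit (\modelAssign \Rightarrow \atom{p}) \implies \hpdef{\atom{p}} \models (\modelAssign \Rightarrow \atom{p})$ is immediate and uses neither the Horn nor the monotonicity hypothesis: as noted in Section~\ref{subsec:DRAT}, $F \unit C$ always implies $F \models C$, since a unit-propagation conflict on $F \wedge \overline{C}$ witnesses its unsatisfiability. So the work is entirely in the converse (completeness) direction.

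For that direction, I would first unpack the formula in play. Because $\modelAssign$ is a conjunction (set) of positive atoms $a \in A$ and $\atom{p}$ is a single positive atom, the clause $\modelAssign \Rightarrow \atom{p}$ is $\bigvee_{a \in \modelAssign} \bar{a} \vee \atom{p}$, and its negation is exactly the set of unit clauses $\{\, a : a \in \modelAssign \,\} \cup \{\bar{\atom{p}}\}$. Hence $\hpdef{\atom{p}} \models (\modelAssign \Rightarrow \atom{p})$ is equivalent to the unsatisfiability of $F := \hpdef{\atom{p}} \wedge \overline{\modelAssign \Rightarrow \atom{p}}$. The key structural observation is that $F$ is again a set of Horn clauses: $\hpdef{\atom{p}}$ is Horn by assumption, each positive unit $a$ is Horn (one positive literal, no negatives), and $\bar{\atom{p}}$ is Horn (no positive literal). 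Checking that negating the lemma adds only unit clauses and never breaks the Horn property is precisely where the hypotheses that $\atom{p}$ is the \emph{positive} atom and that $\modelAssign$ consists of \emph{positive} atoms are used.

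The heart of the argument is then the classical completeness of unit propagation on Horn formulas: an unsatisfiable Horn formula always propagates to the empty clause. I would either cite this or prove it inline by the standard fixpoint argument. Run unit propagation on $F$ to a fixpoint. If it derives a conflict we are done, as this conflict is exactly $\hpdef{\atom{p}} \unit (\modelAssign \Rightarrow \atom{p})$. Otherwise, let $\alpha$ be the forced partial assignment and extend it to a total assignment by setting every still-unassigned variable to $\bot$. At a conflict-free fixpoint, every clause not already satisfied by $\alpha$ has at least two unassigned literals; since each clause is Horn it has at most one positive literal, so at least one of those unassigned literals is negative and therefore becomes true once its variable is set to $\bot$. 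Thus the extension satisfies all of $F$, contradicting its unsatisfiability, so propagation must in fact reach a conflict.

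I expect the only real obstacle to be this Horn completeness lemma, and it is classical; everything else is bookkeeping. Notably, I expect monotonicity of $p$ to play no role in the logical core of the proof---it matters only for guaranteeing that lemmas of the form $\modelAssign \Rightarrow \atom{p}$ arise and are genuinely entailed---so the paper-specific insight is the recognition that the one-sided, positive-atom form of the theory lemma keeps $F$ inside the Horn fragment, where unit propagation is complete.
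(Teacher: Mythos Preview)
Your proposal is correct and follows essentially the same route as the paper: negate the lemma into the unit clauses $\modelAssign \cup \{\bar{\atom{p}}\}$, observe that adjoining them to $\hpdef{\atom{p}}$ preserves the Horn property, and then invoke the classical fact that unit propagation decides satisfiability of Horn formulas. The paper simply asserts that last fact without proof, whereas you supply the standard fixpoint argument inline; your observation that monotonicity plays no logical role is also accurate.
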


\begin{proof}
    Suppose $\hpdef{\atom{p}} \models (\modelAssign \Rightarrow \atom{p})$, then $\hpdef{\atom{p}} \wedge  (\modelAssign \wedge \atom{\bar{p}})$ is unsatisfiable. Since $\modelAssign \wedge \atom{\bar{p}}$ is equivalent to a set of unit clauses,  $\hpdef{\atom{p}} \wedge  (\modelAssign \wedge \atom{\bar{p}})$ still contains only Horn clauses, so satisfiability can be determined by unit propagation.
\end{proof}

\begin{exmp}\label{example:reachability}
Let $reach_{s}^{t}$ be the reachability predicate for the directed
graph shown in Fig.~\ref{fig:graph} (left).
The definition schema for graph reachability
in Sec.~\ref{sec:buildEncoding} yields
the following set of Horn clauses:
$\hpdef{\atom{reach_{s}^{t}}}:=$
(1)~$\overline{\atom{s}}\vee \overline{\atom{a}}\vee  \atom{v1} $,
(2)~$\overline{\atom{v1}}  \vee  \overline{\atom{c}}  \vee  \atom{v3}$,
(3)~$\overline{\atom{v3}}  \vee  \overline{\atom{h}}  \vee  \atom{t}$,
(4)~$\overline{\atom{s}}  \vee  \overline{\atom{b}}  \vee  \atom{v2} $,
(5)~$\overline{\atom{v3}}  \vee  \overline{\atom{e}}  \vee  \atom{v2} $,
(6)~$\overline{\atom{v2}}  \vee  \overline{\atom{d}}  \vee  \atom{v4} $,
(7)~$\overline{\atom{v4}}  \vee  \overline{\atom{f}}  \vee  \atom{v3} $,
(8)~$\overline{\atom{v4}}  \vee  \overline{\atom{g}}  \vee  \atom{t} $,
(9)~$\overline{\atom{t}}  \vee  \overline{\atom{reach_{s}^{t}}} $,
(10)~$\atom{s}$,
where $v1,\ldots,v5$, $s$, and $t$ are auxiliary variables.
Any theory lemma of the form $\modelAssign\Rightarrow\atom{p}$,
e.g., $\overline{\atom{a}} \vee  \overline{\atom{c}}
\vee \overline{\atom{h}} \vee \atom{reach_{s}^{t}}$, can be proven from
$\hpdef{\atom{reach_{s}^{t}}}$ via unit propagation.
Also, note that
one-sidedness allows a simpler definition, despite
the cycle in the graph, e.g.,
consider assignment $\model = \{\overline{\atom{a}},\overline{\atom{b}},\overline{\atom{c}},\atom{d},\atom{e},\atom{f},\atom{g},\atom{h}\}$.
Then, $reach_s^t =\bot$, but
$\hpdef{\atom{reach_{s}^{t}}} \not\models (\model \Rightarrow \atom{\overline{reach_s^t}})$.
\end{exmp}

Horn theory has limited expressiveness, but it is always sufficient
to encode a propositional definition for any SMMT theory:
Given a
monotonic predicate atom $\atom{p}$, we can always encode a \textit{Horn
propositional definition} $\hpdef{\atom{p}}$ as the conjunction
of all valid theory lemmas from the theory of $\atom{p}$. This is
because every theory lemma is restricted to the form
$(\modelAssign \Rightarrow \atom{p})$,
where $\modelAssign$ is a set of positive atoms (due to monotonicity).
Hence, $\hpdef{\atom{p}}$ is a set of Horn clauses.
However, such a na\"ive encoding blows up exponentially.
Instead,
we will seek a compact
Horn definition $\hpdef{\atom{p}}$ that \textit{approximates}
a non-Horn propositional definition $\pdef{\atom{p}}$:

\begin{definition}[Horn Upper-Bound]\label{def:hb}
    Let $\pdef{\atom{p}}$ be a propositional definition of $\atom{p}$. A set of Horn clauses $\ubpdef{\atom{p}}$ is a \textit{Horn upper-bound} if $\pdef{\atom{p}} \models \ubpdef{\atom{p}}$. 
\end{definition}

For the strongest proving power, we want the tightest Horn upper-bound
possible.
Unfortunately, the least Horn upper-bound of a non-Horn theory can
still contain exponentially many Horn clauses~\cite{DBLP:conf/aaai/SelmanK91}.
Fortunately, we don't actually need a Horn upper-bound on the
\textit{exact} theory definition, but only of enough of the definition
to prove the fixed set
of theory lemmas that constitute the proof obligations.
This motivates the next definition.

\begin{definition}[Proof-Specific Horn Definition]
Given an exact definition $\pdef{\atom{p}}$
and a set of theory lemmas
$\mathbb{O}:= \{C_1, \ldots C_n \}$ from the theory of $\atom{p}$,
a \emph{proof-specific Horn definition} of $\atom{p}$ is a
Horn upper-bound $\ubpdef{\atom{p}}$ of $\pdef{\atom{p}}$ such that
$\ubpdef{\atom{p}} \unit C$ for every $C \in \mathbb{O}$.
\end{definition}

\noindent
Our goal in the next two subsections is how to derive such compact,
proof-specific Horn definitions.

\begin{exmp}\label{example:pshd}
Continuing Ex.~\ref{example:reachability},
given a proof obligation
$\mathbb{O} $ with two theory lemmas: $\{ \overline{\atom{a}} \vee
\overline{\atom{c}} \vee \overline{\atom{h}} \vee \atom{reach_{s}^{t}},
\; \overline{\atom{b}} \vee  \overline{\atom{d}} \vee \overline{\atom{g}}
\vee \atom{reach_{s}^{t}} \}$, the subset of Horn clauses with IDs (1),
(2), (3), (4), (6), (8), (9) and (10) is a proof-specific Horn definition
for $\atom{reach_{s}^{t}}$, which can be visualized
in Fig.~\ref{fig:graph} (middle).
\end{exmp}

Given a proof obligation $\mathbb{O}$, we can make all theory lemmas in
$\mathbb{O}$ DRAT checkable
if we have exact propositional definitions for the theories and
if we can
dynamically transform them into compact,
proof-specific Horn definitions at the time of proof checking.
We simply add these additional clauses to the input of the DRAT-proof-checker.

\subsection{Monotonic Definitions}\label{sec:mono_prop}

The derivation of compact, proof-specific Horn definitions from
arbitrary propositional definitions is a two-step process:
we first show that every propositional definition for
a monotonic predicate atom can be converted into a monotonic definition
of linear size (this section),
and then use theory lemmas in the proof obligations to create the
Horn approximation of the definition (Sec.~\ref{sec:Hornapprox}).

\begin{definition}[Monotonic Definition]\label{def:monod}
Let a monotonic predicate $p$ over input $A$ be given. A CNF formula
$\monopdef{\atom{p}}$ is a monotonic definition of the positive predicate
atom $\atom{p}$ if $\monopdef{\atom{p}}$ is a propositional definition
of $\atom{p}$, and it satisfies the following syntax restrictions:
(1) $\monopdef{\atom{p}}$ does not contain positive atoms from $A$,
(2) $\monopdef{\atom{p}}$ does not contain $\atom{\bar{p}}$, and (3)
$\atom{p}$ appears only in Horn clauses. The monotonic definition for
$\atom{\bar{p}}$ is defined analogously.
\end{definition}

We now define the procedure, \textsc{MonoT},
for transforming a propositional definition into
a linear-size monotonic definition:
\begin{definition}[Monotonic Transformation]\label{def:monot}
Let a monotonic predicate $p$ over input $A$ and a propositional
definition $\pdef{\atom{p}}$ for the positive predicate atom $\atom{p}$
be given.  $\textsc{MonoT}(\atom{p}, \pdef{\atom{p}})$ is the result of
the following transformations on $\pdef{\atom{p}}$:
(1) replace every occurrence of an input atom ($\atom{a}$ for $a \in
    A$) in $\pdef{\atom{p}}$ with a new atom $\atom{a'}$ (\:$\overline{\atom{a}}$
    is replaced with $\overline{\atom{a'}}$),
(2) replace every occurrence of $\atom{p}$ and $\atom{\overline{p}}$
    with $\atom{p'}$ and $\atom{\overline{p'}}$ respectively, and
(3) add the following Horn clauses: $\atom{a} \Rightarrow {\atom{a'}}$
    for every $a \in A$, and $\atom{p'} \Rightarrow \atom{p}$.
\end{definition}

\begin{theorem}[Correctness of Monotonic Transformation]\label{thm:monoexp}
   Given a monotonic predicate $p$ over input $A$ and the monotonic predicate atom $\atom{p}$, if we have any propositional definition $\pdef{\atom{p}}$ with $n$ clauses, then $\textsc{MonoT}(\atom{p}, \pdef{\atom{p}})$ results in a monotonic definition $\monopdef{\atom{p}}$ with at most $n + |A| + 1$ clauses.
\end{theorem}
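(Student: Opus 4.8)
The plan is to separate the two assertions of the theorem: the syntactic clause-count bound, which is immediate, and the semantic requirement that $\monopdef{\atom{p}}$ really is a monotonic definition of $\atom{p}$, which carries the content. Throughout I would fix notation: write $\Sigma'$ for the formula obtained from $\pdef{\atom{p}}$ by transformation steps (1) and (2) (renaming each input atom $\atom{a}$ to a fresh $\atom{a'}$ and $\atom{p},\atom{\overline{p}}$ to fresh $\atom{p'},\atom{\overline{p'}}$), set $A' = \{a' \mid a \in A\}$, and collect the bridge clauses as $B = \{\overline{\atom{a}} \vee \atom{a'} \mid a \in A\} \cup \{\overline{\atom{p'}} \vee \atom{p}\}$, so that $\monopdef{\atom{p}} = \Sigma' \cup B$.

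First I would dispose of the easy parts. Since steps (1) and (2) are a bijective variable renaming, $\Sigma'$ has exactly the $n$ clauses of $\pdef{\atom{p}}$, and step (3) adds exactly $|A| + 1$ clauses, giving the bound $n + |A| + 1$. The three syntactic restrictions of Def.~\ref{def:monod} then follow by inspection: every positive occurrence of an input atom in $\pdef{\atom{p}}$ was renamed away, so $\atom{a}$ survives only negatively, inside the bridge clauses (restriction 1); $\atom{\overline{p}}$ was renamed to $\atom{\overline{p'}}$ and the only new occurrence of $\atom{p}$ is positive (restriction 2); and $\atom{p}$ occurs only in the single Horn clause $\overline{\atom{p'}} \vee \atom{p}$ (restriction 3).

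The heart of the argument is showing $\monopdef{\atom{p}}$ is a propositional definition of $\atom{p}$ per Def.~\ref{def:pdef}. I would first record that renaming preserves that property, so $\Sigma'$ is a propositional definition of $\atom{p'}$ over $A'$ for the relabeled predicate $p'$, which is still monotonic. For condition (1), given any $\model$ over $A$, I would satisfy $\reduce{\monopdef{\atom{p}}}{\model}$ by setting every $\atom{a'}$ to $\top$ (discharging all input bridge clauses), invoking satisfiability of $\Sigma'$ under this primed assignment to fix the auxiliary variables and $\atom{p'}$, and setting $\atom{p} = \top$ for the last bridge clause. For the ``$p(\model) = \bot$'' half of condition (2), I would instead set the primed inputs equal to $\model$ (i.e.\ $\model'(a') = \model(a)$); this satisfies the input bridge clauses and forces $p'(\model') = \bot$, so by the definitional property of $\Sigma'$ there is a model of $\Sigma'$ agreeing with $\model'$ but falsifying $\atom{p'}$, and extending it with $\atom{p} = \bot$ gives a model of $\monopdef{\atom{p}}$ agreeing with $\model$ but falsifying $\atom{p}$, whence $\monopdef{\atom{p}} \not\models (\model \Rightarrow \atom{p})$.

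The main obstacle, and the only place monotonicity is essential, is the converse half of condition (2): if $p(\model) = \top$, every model $\mu$ of $\monopdef{\atom{p}}$ agreeing with $\model$ on $A$ must satisfy $\atom{p}$. The key point is that the one-directional bridge clauses $\overline{\atom{a}} \vee \atom{a'}$ force $\atom{a'} = \top$ only when $\atom{a} = \top$, so the primed restriction $\model' := \mu|_{A'}$ dominates $\model$ coordinatewise (under the identification $a \leftrightarrow a'$) but may set extra inputs true. Monotonicity of $p'$ then upgrades $p(\model) = \top$ to $p'(\model') = \top$, after which the definitional property of $\Sigma'$ forces $\mu(\atom{p'}) = \top$, and the bridge clause $\overline{\atom{p'}} \vee \atom{p}$ forces $\mu(\atom{p}) = \top$. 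Combining the two halves yields the biconditional of Def.~\ref{def:pdef}(2). I expect the only delicate bookkeeping to be keeping the $a \leftrightarrow a'$ identification straight and confirming that the over-approximation introduced by the one-sided bridge clauses is exactly what monotonicity tolerates.
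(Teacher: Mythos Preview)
Your proposal is correct and follows essentially the same approach as the paper: dispose of the clause count and the syntactic restrictions by inspection, then verify Def.~\ref{def:pdef} by lifting models between $\pdef{\atom{p}}$ and $\monopdef{\atom{p}}$ via the $a\leftrightarrow a'$ correspondence, with monotonicity doing the real work in the $p(\model)=\top$ direction. The only cosmetic differences are that you set all $\atom{a'}=\top$ for the satisfiability check (the paper copies $\atom{a}$ to $\atom{a'}$), and you argue the $p(\model)=\top$ case model-theoretically while the paper phrases it via an entailment chain through the positive part $\modelAssign$ of $\model$; both routes exploit exactly the same monotonicity step.
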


\begin{ProceedingsVersion}
The proof of Theorem~\ref{thm:monoexp} is in the extended version of this paper.
\end{ProceedingsVersion}
\begin{ExtendedVersion}
The proof of Theorem~\ref{thm:monoexp} is in Appendix~\ref{appendix:monoexpcor}.
\end{ExtendedVersion}
The correctness relies on
the fact that the predicate $p$ is indeed monotonic,
and that our propositional definitions need only be one-sided.
If the
monotonic definition is already in Horn theory,
it can be used directly verify theory lemmas via RUP;
otherwise, we proceed to Horn approximation, described next.

\subsection{Instantiation-Based, Proof-Specific Horn Definition}\label{sec:Hornapprox}

We present the transformation from monotonic definitions into
proof-specific Horn definitions.  The transformation exploits the duality
between predicates' positive and negative definitions.

\begin{lemma}[Duality]\label{lemma:dual}
Let $p$ be a monotonic predicate over Boolean arguments $A$. Suppose
$\pdef{\atom{p}}$ and $\pdef{\bar{\atom{p}}}$ are positive and
negative propositional definitions, respectively. For every assignment
$\model$ to the variables in $A$:
\begin{enumerate}
    \item $\pdef{\atom{p}} \models (\model \Rightarrow \atom{p})$ if and only if $\pdef{\bar{\atom{p}}} \wedge  \model \wedge \atom{p}$ is satisfiable.
    \item $\pdef{\bar{\atom{p}}} \models (\model \Rightarrow \bar{\atom{p}})$ if and only if $\pdef{\atom{p}} \wedge \model \wedge \bar{\atom{p}}$ is satisfiable.
\end{enumerate}
\end{lemma}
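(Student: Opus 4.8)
The plan is to reduce both biconditionals to the defining property of a propositional definition (Definition~\ref{def:pdef}), pivoting on the single two-valued quantity $p(\model)$. The key observation is that $\atom{p}$ and $\bar{\atom{p}}$ are the two literals of one and the same predicate atom, so the positive definition $\pdef{\atom{p}}$ and the negative definition $\pdef{\bar{\atom{p}}}$ are both obliged to agree with the \emph{same} truth value $p(\model)$; this value will serve as the common bridge between the entailment on one side of each claim and the satisfiability on the other. Throughout I would use that $\model$ is a \emph{total} assignment over $A$, so that $p(\model)$ is a well-defined element of $\{\top,\bot\}$.

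First I would record the two semantic facts that Definition~\ref{def:pdef} supplies directly. By the positive definition, $\pdef{\atom{p}} \models (\model \Rightarrow \atom{p})$ holds exactly when $p(\model) = \top$; by the analogous negative definition, $\pdef{\bar{\atom{p}}} \models (\model \Rightarrow \bar{\atom{p}})$ holds exactly when $\bar{p}(\model) = \top$, i.e.\ when $p(\model) = \bot$. These (condition~(2) of the definition, in each polarity) are the only properties I intend to invoke; in particular the satisfiability clause (condition~(1)) is part of the hypotheses but is not needed in the derivation.

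Next I would translate the entailment on the right-hand side of each claim into a satisfiability statement via the elementary fact that $F \not\models C$ iff $F \wedge \overline{C}$ is satisfiable. The clause $\model \Rightarrow \bar{\atom{p}}$ is $\bigl(\bigvee_{l \in \model}\bar{l}\bigr) \vee \bar{\atom{p}}$, whose negation is precisely the collection of unit literals $\model \wedge \atom{p}$; here I use that the complement of the literal $\bar{\atom{p}}$ is the literal $\atom{p}$. Thus ``$\pdef{\bar{\atom{p}}} \wedge \model \wedge \atom{p}$ is satisfiable'' is literally the same assertion as ``$\pdef{\bar{\atom{p}}} \not\models (\model \Rightarrow \bar{\atom{p}})$''.

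Chaining these observations yields statement~(1): $\pdef{\atom{p}} \models (\model \Rightarrow \atom{p})$ iff $p(\model) = \top$ iff $p(\model) \neq \bot$ iff $\pdef{\bar{\atom{p}}} \not\models (\model \Rightarrow \bar{\atom{p}})$ iff $\pdef{\bar{\atom{p}}} \wedge \model \wedge \atom{p}$ is satisfiable. Statement~(2) then follows from the identical argument with the roles of the positive and negative definitions interchanged. The proof is short, so there is no real obstacle beyond careful bookkeeping: the one point that must be handled precisely is that $\atom{p}$ and $\bar{\atom{p}}$ are complementary literals of a single atom and that $\model$ is total, which is exactly what lets the single value $p(\model)$ act as the pivot linking a positive-side entailment to a negative-side satisfiability.
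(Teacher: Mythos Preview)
Your proof is correct and follows essentially the same route as the paper's: both pivot on the truth value $p(\model)$, use condition~(2) of Definition~\ref{def:pdef} in each polarity to translate entailment to and from that value, and then rewrite $\pdef{\bar{\atom{p}}}\not\models(\model\Rightarrow\bar{\atom{p}})$ as the satisfiability of $\pdef{\bar{\atom{p}}}\wedge\model\wedge\atom{p}$ via the definition of $\models$. Your additional remarks (that $\model$ must be total and that condition~(1) of the definition is not invoked) are accurate but not essential to the argument.
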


\begin{ProceedingsVersion}
The proof of Lemma~\ref{lemma:dual} is in the extended version of this paper.
\end{ProceedingsVersion}
\begin{ExtendedVersion}
The proof of Lemma~\ref{lemma:dual} is presented in Appendix~\ref{appendix:dual}.
\end{ExtendedVersion}
The duality of the positive ($\pdef{\atom{p}}$) and negative ($\pdef{\bar{\atom{p}}}$) definitions allows us to over-approximate positive (negative) definitions by instantiating the negative (positive) definitions.

\begin{exmp}\label{example:duality}
Returning to Ex.~\ref{example:reachability} and Fig.~\ref{fig:graph},
consider the assignment
$M = \{\atom{a}, \atom{b}, \overline{\atom{c}}, 
\overline{\atom{d}}, \atom{e}, \atom{f}, \atom{g}, \atom{h}\}$.
Since $s$ cannot reach $t$ under this assignment,
any propositional definition
$\pdef{\overline{\atom{reach_s^t}}}$ must imply
$M \Rightarrow \overline{\atom{reach_{s}^{t}}}$.
Dually,
$\hpdef{\atom{reach_{s}^{t}}} \wedge M \wedge \overline{\atom{reach_{s}^{t}}}$
is satisfiable, e.g.,
$\{ \atom{s}, \atom{v1}, \atom{v2}, \overline{\atom{v3}}, \overline{\atom{v4}}, \overline{\atom{t}}\}$.
\end{exmp}

\begin{lemma}[Instantiation-Based Upper-Bound]\label{lemma:ib}
     Let a predicate $p$ over
     input $A$ and a positive definition $\pdef{\atom{p}}$ be given. 
     For any partial assignment $M'$ over 
     $var(\pdef{\atom{p}}) \setminus (var(p) \cup A)$,
     \ \ $\reduce{\pdef{\atom{p}}}{M' \cup \atom{\Bar{p}}} \Rightarrow \overline{\atom{p}}$
     is an over-approximation of $\pdef{\overline{\atom{p}}}$.
     \ \ \ 
     \footnote{
	Note that $\reduce{\pdef{\atom{p}}}{M'}$ is encoded
	in CNF, so to compactly (i.e., linear-size) encode
	$\reduce{\pdef{\atom{p}}}{M'} \Rightarrow \overline{\atom{p}}$
	in CNF, we introduce a new literal $l_i$ for each clause
	$C_i\in\reduce{\pdef{\atom{p}}}{M'}$, create clauses
	$\overline{c_{ij}}\vee \overline{l_i}$ for each literal $c_{ij}\in C_i$,
	and add clause $l_1 \vee
	l_2 \vee\ldots\vee l_n \vee \overline{\atom{p}}$.
	}
\end{lemma}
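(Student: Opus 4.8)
The plan is to read the conclusion through the lens of Definition~\ref{def:hb}: ``$\reduce{\pdef{\atom{p}}}{M' \cup \atom{\Bar{p}}} \Rightarrow \overline{\atom{p}}$ is an over-approximation of $\pdef{\overline{\atom{p}}}$'' means that this constructed formula is a logical consequence of the exact negative definition, i.e.
$\pdef{\overline{\atom{p}}} \models \big(\reduce{\pdef{\atom{p}}}{M' \cup \atom{\Bar{p}}} \Rightarrow \overline{\atom{p}}\big)$, so that the construction has more models (is weaker) than $\pdef{\overline{\atom{p}}}$. To prove this I would fix an arbitrary total assignment $\nu$ satisfying $\pdef{\overline{\atom{p}}}$, abbreviate $\Phi := \reduce{\pdef{\atom{p}}}{M' \cup \atom{\Bar{p}}}$ and $\alpha := \nu|_A$, and show that $\nu$ satisfies the implication. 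I treat the two definitions as sharing only the interface variables $A$ and $\atom{p}$, with disjoint internal and auxiliary variables (including the fresh $l_i$ introduced in the footnote), so $\nu$ is an assignment over the union and its values on the internals of $\Phi$ are free.

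The argument splits on whether $\nu \models \Phi$. If $\nu \not\models \Phi$ the implication holds vacuously, so the only substantive case is $\nu \models \Phi$, where I must derive $\nu(\atom{p}) = \bot$. The first step is to unfold reduction: $\nu \models \Phi$ means exactly that $M'$, the value $\bot$ for $\atom{p}$, and $\nu$'s values on the remaining free variables together form a satisfying assignment of $\pdef{\atom{p}}$. In other words, $\pdef{\atom{p}}$ admits a model whose input part is $\alpha$ and which sets $\atom{p}$ to $\bot$.

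The crux, and the step I expect to be the main obstacle to phrase cleanly, is converting this model into the \emph{semantic} fact $p(\alpha) = \bot$. This is precisely where one-sidedness and the correctness of the positive definition enter, via the contrapositive of the ``if'' direction of property~(2) of Definition~\ref{def:pdef}: if $p(\alpha)$ were $\top$, then $\pdef{\atom{p}} \models (\alpha \Rightarrow \atom{p})$, so every model of $\pdef{\atom{p}}$ whose input part is $\alpha$ would be forced to set $\atom{p} = \top$, contradicting the model just exhibited with $\atom{p} = \bot$. Hence $p(\alpha) = \bot$. The care required here is entirely in the variable bookkeeping, i.e.\ checking that the witnessing model genuinely agrees with $\alpha$ on $A$ and is total over $var(\pdef{\atom{p}})$, which is what reducing by $M' \cup \atom{\Bar{p}}$ (rather than by $M'$ alone) guarantees.

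Finally I would close the loop using property~(2) of Definition~\ref{def:pdef} for the negative definition: since $p(\alpha) = \bot$ we have $\pdef{\overline{\atom{p}}} \models (\alpha \Rightarrow \overline{\atom{p}})$, and because $\nu$ satisfies $\pdef{\overline{\atom{p}}}$ with $\nu|_A = \alpha$, it must satisfy $\overline{\atom{p}}$, i.e.\ $\nu(\atom{p}) = \bot$, as needed. I would also remark that the quantifier ``for any partial assignment $M'$'' requires no hypothesis: enlarging $M'$ only strengthens $\Phi$ and thereby weakens the implication, so the over-approximation property is preserved for every choice of $M'$, which is exactly the freedom later exploited to force the result into Horn form.
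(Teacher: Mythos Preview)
Your proof is correct but follows a genuinely different route from the paper's. The paper does \emph{not} prove the model-theoretic entailment $\pdef{\overline{\atom{p}}}\models(\Phi\Rightarrow\overline{\atom{p}})$; instead it shows the weaker (and application-tailored) property that every \emph{theory lemma} $M_A\Rightarrow\overline{\atom{p}}$ implied by the construction is also implied by $\pdef{\overline{\atom{p}}}$. It does this by starting from the assumed implication, unwinding through a chain of satisfiability manipulations to obtain that $\pdef{\atom{p}}\wedge M_A\wedge\overline{\atom{p}}$ is satisfiable, and then invoking the Duality lemma (Lemma~\ref{lemma:dual}) to conclude. Your argument instead fixes a model of $\pdef{\overline{\atom{p}}}$ and shows it satisfies the implication, appealing twice to Definition~\ref{def:pdef}(2)---once contrapositively for the positive definition, once forwardly for the negative one---which is precisely the content of Lemma~\ref{lemma:dual} inlined. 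Your route is more elementary (no separate Duality lemma needed) and yields a nominally stronger conclusion; the paper's route stays closer to the downstream use and makes the role of duality explicit as a reusable lemma.

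One small caveat worth noting: reading ``over-approximation'' literally through Definition~\ref{def:hb} as $\pdef{\overline{\atom{p}}}\models(\cdot)$ is clean at the \emph{formula} level you actually argue about, but does not go through verbatim for the CNF encoding in the footnote, since the fresh $l_i$ are unconstrained by $\pdef{\overline{\atom{p}}}$ (e.g., set all $l_i=\bot$ while $\atom{p}=\top$). The paper's theory-lemma-preservation formulation sidesteps this because theory lemmas live over the interface variables only; in your framing you would need one extra sentence about equisatisfiability of the footnote encoding with $\Phi\Rightarrow\overline{\atom{p}}$ before descending to CNF.
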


\begin{ProceedingsVersion}
The proof of Lemma~\ref{lemma:ib} (in the extended paper) 
relies on the duality in Lemma~\ref{lemma:dual}.
\end{ProceedingsVersion}
\begin{ExtendedVersion}
The proof of Lemma~\ref{lemma:ib} (in Appendix~\ref{sec:proofibub})
relies on the duality in Lemma~\ref{lemma:dual}.
\end{ExtendedVersion}
Lemma~\ref{lemma:ib} enables upper-bound
construction and paves the way for constructing an
instantiation-based Horn upper-bound
of a monotonic definition.

\begin{lemma}[Instantiation-Based Horn Upper-Bound]\label{lemma:ibh}
Given a monotonic predicate $p$ over input $A$ and a positive
monotonic definition $\monopdef{\atom{p}}$, let $X$ represent the
set of auxiliary variables: $var(\monopdef{\atom{p}}) \setminus
(A \cup var(p))$.
For any complete satisfying assignment $M_{X\cup A}$ to
$\reduce{\monopdef{\atom{p}}}{\bar{\atom{p}}}$, the formula
$(\reduce{\monopdef{\atom{p}}}{\bar{\atom{p}} \cup \modelX}) \Rightarrow
\bar{\atom{p}}$ serves as a Horn upper-bound for any propositional
definition of $\bar{\atom{p}}$,
where $\modelX$ is a partial assignment
derived from $M_{X\cup A}$ for the auxiliary variables $X$.

\end{lemma}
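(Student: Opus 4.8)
The plan is to verify the two requirements of Definition~\ref{def:hb} separately for the formula $\Psi := (\reduce{\monopdef{\atom{p}}}{\bar{\atom{p}} \cup \modelX}) \Rightarrow \bar{\atom{p}}$: namely that $\Psi$ (a) is an over-approximation of every propositional definition $\pdef{\bar{\atom{p}}}$ of the negative atom, and (b) is (encodable as) a set of Horn clauses. The over-approximation half I would obtain directly from Lemma~\ref{lemma:ib}. A monotonic definition is, by Definition~\ref{def:monod}, a positive propositional definition of $\atom{p}$, so I can instantiate Lemma~\ref{lemma:ib} with $\pdef{\atom{p}} := \monopdef{\atom{p}}$ and $M' := \modelX$. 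The hypothesis of Lemma~\ref{lemma:ib} requires $M'$ to range only over $var(\monopdef{\atom{p}}) \setminus (var(p) \cup A)$, which is exactly the set $X$ that $\modelX$ assigns; hence $\reduce{\monopdef{\atom{p}}}{\modelX \cup \bar{\atom{p}}} \Rightarrow \bar{\atom{p}}$ over-approximates $\pdef{\bar{\atom{p}}}$, i.e.\ $\pdef{\bar{\atom{p}}} \models \Psi$, for every such negative definition.

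The substance of the lemma is the Horn half, and here I would exploit the three syntactic restrictions of a monotonic definition together with the fact that $\modelX$ assigns \emph{every} auxiliary variable. First I would pin down the shape of the antecedent $\Phi := \reduce{\monopdef{\atom{p}}}{\bar{\atom{p}} \cup \modelX}$. Reducing by $\bar{\atom{p}}$ only deletes the literal $\atom{p}$ from the clauses containing it: by restriction~(2) the literal $\atom{\bar{p}}$ never occurs, so no clause is satisfied and removed, and by restriction~(3) the clauses carrying $\atom{p}$ are Horn, so deleting their unique positive literal $\atom{p}$ leaves them purely negative. Reducing further by $\modelX$, a \emph{complete} assignment of $X$, eliminates every auxiliary literal: each $X$-literal is either falsified and dropped or satisfies and deletes its clause. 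Since $\modelX$ leaves $A$ untouched and, by restriction~(1), no positive $A$-atom ever appears in $\monopdef{\atom{p}}$, every surviving clause of $\Phi$ consists solely of negative literals over $A$. Thus $\Phi$ is a set of purely negative — hence trivially Horn — clauses.

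Next I would use that $M_{X \cup A}$ is a \emph{satisfying} assignment of $\reduce{\monopdef{\atom{p}}}{\bar{\atom{p}}}$ to exclude the degenerate case. Any clause surviving the $\modelX$-reduction is not satisfied by the $X$-part, so it must already be satisfied by the $A$-part $M_A$, which means it retains at least one negative $A$-literal true under $M_A$. Consequently no empty clause arises and $\Phi$ is satisfiable, so $\Psi$ is a genuine, non-vacuous bound rather than the trivial $\bot \Rightarrow \bar{\atom{p}}$. Finally I would assemble $\Psi = \Phi \Rightarrow \bar{\atom{p}}$ through the linear selector-variable construction accompanying Lemma~\ref{lemma:ib}, introducing one fresh literal $l_i$ per surviving clause $D_i$: because each $D_i$ is purely negative, every generated clause of the form $\atom{a_{ij}} \vee \overline{l_i}$ carries a single positive literal and is Horn.

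The hard part will be this last assembly step: certifying that the \emph{whole} implication $\Phi \Rightarrow \bar{\atom{p}}$ stays inside the Horn fragment of Definition~\ref{def:hb}, in particular the single clause that links the selector literals to the conclusion $\bar{\atom{p}}$. Everything hinges on the structural fact established above — that both successive reductions leave only purely negative antecedent clauses — which is exactly where all three monotonic restrictions and the completeness of $\modelX$ are load-bearing: if even one surviving clause retained a positive auxiliary or input literal, the encoding would emit a non-Horn clause. I therefore expect the bulk of the write-up to be careful bookkeeping of which literals and clauses survive each reduction, with Lemma~\ref{lemma:ib} discharging soundness of the bound and the monotonic-definition restrictions discharging the Horn shape; the delicate point to get exactly right is the polarity of the connecting clause, which is what ties the Horn claim back to the purely negative form of $\Phi$.
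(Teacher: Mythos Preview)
Your approach is essentially the paper's: invoke Lemma~\ref{lemma:ib} for the over-approximation half, then use the syntactic restrictions of Definition~\ref{def:monod} together with the completeness of $\modelX$ to conclude that every surviving clause of $\Phi$ contains only negative $A$-literals, and finally inspect the selector encoding. Your bookkeeping is in fact more detailed than the paper's (you invoke restriction~(3) and add the non-emptiness observation; the paper uses only restriction~(1) and the completeness of $\modelX$).

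The one point you leave open---the polarity of the connecting clause---is resolved in the paper not as Horn but as \emph{dual} Horn: the clause $l_1 \vee \cdots \vee l_n \vee \bar{\atom{p}}$ has many positive literals and a single negative one, so it is dual Horn but not Horn in the strict sense. The binary clauses $\atom{a_{ij}} \vee \overline{l_i}$ are both Horn and dual Horn, so the whole encoding is dual Horn. The paper's use of ``Horn upper-bound'' in the lemma statement is therefore slightly loose; what is actually established (and what suffices for the downstream RUP argument, since unit propagation decides dual-Horn satisfiability just as it does Horn) is a dual-Horn upper-bound for the negative atom. So your instinct that the connecting clause is the load-bearing detail is correct; you should simply not expect it to come out Horn.
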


\begin{ProceedingsVersion}
\noindent
(Proof in the extended paper.)
\end{ProceedingsVersion}
\begin{ExtendedVersion}
\noindent
The proof of Lemma~\ref{lemma:ibh} is in Appendix~\ref{sec:proofibhub}.
\end{ExtendedVersion}
Note that the instantiation-based Horn
upper-bound of a negative predicate atom $\bar{\atom{p}}$ is constructed from
a monotonic definition of the positive predicate atom $\monopdef{\atom{p}}$,
and vice-versa.

For a given theory lemma, the instantiation-based Horn upper-bound
construction (Lemma~\ref{lemma:ibh}) enables the verification
of the theory lemma if we can find a sufficient ``witness'' $\modelX$
for the instantiation. We now prove that a witness always exists for
every valid theory lemma and does not exist otherwise.

\begin{theorem}[Lemma-Specific Horn Upper-Bound]\label{thm:lemmaHorn}
Let a monotonic predicate $p$ over input $A$, a monotonic definition
$\monopdef{\atom{p}}$  and a lemma in the form $\modelAssign \Rightarrow
\overline{\atom{p}}$ be given. We denote $X$ as the set of auxiliary
variables: $var(\monopdef{\atom{p}}) \setminus (A \cup var(p))$.
The lemma $\modelAssign \Rightarrow \overline{\atom{p}}$ is in the theory of
$\overline{\atom{p}}$ if and only if there
exists an 
assignment $\modelX$ on $X$ such that:
    (1)~$\reduce{\monopdef{\atom{p}}}{\bar{\atom{p}} \cup
    \modelX \cup \modelAssign}$ is satisfiable and
    (2)~$(\reduce{\monopdef{\atom{p}}}{\bar{\atom{p}} \cup \modelX}
    \Rightarrow \bar{\atom{p}}) \unit (\modelAssign \Rightarrow
    \bar{\atom{p}})$.

\end{theorem}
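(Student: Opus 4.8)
The plan is to prove the two directions separately, since they rest on different machinery. The $(\Leftarrow)$ direction is essentially a repackaging of the already-established Lemma~\ref{lemma:ibh}, so I would dispatch it first. Assume some $\modelX$ satisfies (1) and (2). Condition (1) says $\reduce{\monopdef{\atom{p}}}{\bar{\atom{p}}\cup\modelX\cup\modelAssign}$ is satisfiable; extending one of its satisfying assignments by $\modelX$, $\modelAssign$ and $\bar{\atom{p}}$ produces a complete satisfying assignment of $\reduce{\monopdef{\atom{p}}}{\bar{\atom{p}}}$ whose restriction to $X$ is $\modelX$, which is exactly the hypothesis Lemma~\ref{lemma:ibh} needs. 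That lemma then certifies that $U := (\reduce{\monopdef{\atom{p}}}{\bar{\atom{p}}\cup\modelX}\Rightarrow\bar{\atom{p}})$ is a Horn upper-bound of any negative propositional definition, i.e. $\pdef{\bar{\atom{p}}}\models U$. Condition (2) gives $U\unit(\modelAssign\Rightarrow\bar{\atom{p}})$, hence $U\models(\modelAssign\Rightarrow\bar{\atom{p}})$, and transitivity yields $\pdef{\bar{\atom{p}}}\models(\modelAssign\Rightarrow\bar{\atom{p}})$. Weakening the antecedent to any total $\model\supseteq\modelAssign$ preserves the entailment, so Definition~\ref{def:pdef} applied to $\bar{\atom{p}}$ forces $p(\model)=\bot$ for every such $\model$, which is precisely the statement that $\modelAssign\Rightarrow\bar{\atom{p}}$ lies in the theory of $\bar{\atom{p}}$.

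For $(\Rightarrow)$, I would construct the witness explicitly. Assuming the lemma is valid, pick the distinguished total assignment $\model$ to $A$ that agrees with $\modelAssign$ and sets every remaining input to $\top$ (the maximal assignment in the positive-monotonic convention). Since $\model$ extends $\modelAssign$, validity gives $p(\model)=\bot$, equivalently $\pdef{\bar{\atom{p}}}\models(\model\Rightarrow\bar{\atom{p}})$. Taking $\monopdef{\atom{p}}$ as the positive definition and applying duality Lemma~\ref{lemma:dual}(2), I obtain that $\monopdef{\atom{p}}\wedge\model\wedge\bar{\atom{p}}$ is satisfiable; fix a satisfying assignment $M$ and set $\modelX := M|_X$. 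Condition (1) is then immediate, because $M$ itself witnesses the satisfiability of $\reduce{\monopdef{\atom{p}}}{\bar{\atom{p}}\cup\modelX\cup\modelAssign}$.

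The crux is condition (2), and I expect the residual-clause analysis to be the main obstacle, since it is where one-sidedness, the \textsc{MonoT} construction, and monotonicity all have to be combined. I would exploit the syntactic shape guaranteed by Definition~\ref{def:monot}: the unprimed input atoms $\atom{a}$ occur only in the added clauses $\atom{a}\Rightarrow\atom{a'}$, while the rest of $\monopdef{\atom{p}}$ mentions only primed inputs, auxiliaries, and $\atom{p'}$, all of which lie in $X$. Hence, after reducing by $\bar{\atom{p}}\cup\modelX$, every clause except those added implications is fully evaluated, and all are satisfied because $M$ satisfies them; the residual formula $R := \reduce{\monopdef{\atom{p}}}{\bar{\atom{p}}\cup\modelX}$ therefore collapses to exactly the units $\{\overline{\atom{a}} : \modelX(\atom{a'})=\bot\}$. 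The maximal choice of $\model$ now pays off: for every $a\notin\modelAssign$ we set $M(\atom{a})=\top$, so $\atom{a}\Rightarrow\atom{a'}$ forces $\modelX(\atom{a'})=\top$, meaning a residual unit $\overline{\atom{a}}$ can arise only for an $a$ already set false by $\modelAssign$. Thus $R\subseteq\modelAssign$ as sets of literals, so $U=(R\Rightarrow\bar{\atom{p}})$ is entailed by $\modelAssign$. Finally, because $U$ is Horn (in the complemented convention in which $\bar{\atom{p}}$ and the $\overline{\atom{a}}$ are the positive literals), the RUP check reduces to the unit-propagation argument behind Theorem~\ref{thm:hornrup}: adding the units $\modelAssign\wedge\atom{p}$ propagates every clause-selector literal $l_i$ to false via $R\subseteq\modelAssign$ and empties the final selector clause, establishing condition (2). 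The delicate point throughout is showing that this particular $\modelX$ makes $R$ a subset of $\modelAssign$, which is exactly what upgrades the upper bound from merely sound to tight enough to prove the lemma.
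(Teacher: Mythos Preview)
Your backward direction and the forward-direction argument for condition~(1) are correct and match the paper. The gap is in your argument for condition~(2) of the forward direction: you invoke the syntactic shape of Definition~\ref{def:monot} (the \textsc{MonoT} construction), but the theorem is stated for an \emph{arbitrary} monotonic definition in the sense of Definition~\ref{def:monod}. Definition~\ref{def:monod} only guarantees that input atoms from $A$ occur negatively; it does \emph{not} say they occur only in bridge clauses $\atom{a}\Rightarrow\atom{a'}$, nor that primed copies exist at all. For a monotonic definition not produced by \textsc{MonoT}, a single clause may contain several negative input literals alongside auxiliaries, and after reducing by $\overline{\atom{p}}\cup\modelX$ the residual can be a non-unit disjunction of negative input literals. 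Concretely, take $\monopdef{\atom{p}}=\{\overline{\atom{a}}\vee\overline{\atom{b}}\vee\atom{x},\ \overline{\atom{c}}\vee\atom{x},\ \overline{\atom{x}}\vee\atom{p}\}$ (which satisfies Definition~\ref{def:monod}) and the lemma $\{\overline{\atom{a}},\overline{\atom{c}}\}\Rightarrow\overline{\atom{p}}$: the witness sets $\atom{x}$ to $\bot$, and the residual contains the binary clause $\overline{\atom{a}}\vee\overline{\atom{b}}$, so your claim that $R$ collapses to units fails.

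The repair is what the paper does, using only Definition~\ref{def:monod}. Because $\modelX$ satisfies $\reduce{\monopdef{\atom{p}}}{\overline{\atom{p}}\cup\modelOver}$ and the only $A$-literals present are negative, flipping any unassigned input from $\top$ to $\bot$ can only satisfy more clauses; hence $\modelX$ satisfies $\reduce{\monopdef{\atom{p}}}{\overline{\atom{p}}\cup M}$ for \emph{every} total $M$ extending $\modelAssign$. This forces each residual clause of $R$ to contain at least one literal $\overline{\atom{a}}\in\modelAssign$ (otherwise $\modelOver$ would already falsify it). That weaker conclusion suffices for your RUP argument: the selector clause $\atom{a}\vee\overline{l_i}$ for that literal unit-propagates $\overline{l_i}$, and the final selector clause then conflicts with $\atom{p}$. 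The paper phrases this semantically (a contradiction argument showing $\models$, then upgrading to $\unit$ via the dual-Horn form from Lemma~\ref{lemma:ibh}), but the content is the same.
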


\begin{ProceedingsVersion}
\noindent
(Proof in the extended paper.)
\end{ProceedingsVersion}
\begin{ExtendedVersion}
\noindent
The proof of Theorem~\ref{thm:lemmaHorn} is in
Appendix~\ref{sec:correctnesslemmaHorn}.
\end{ExtendedVersion}
Theorem~\ref{thm:lemmaHorn} states
that a lemma-specific Horn upper-bound for a theory lemma $\modelAssign
\Rightarrow \overline{\atom{p}}$
can be constructed by instantiating the
monotonic definition using a ``witness'' assignment $\modelX$.
\footnote{Instead of instantiating a complete assignment on
every auxiliary variable in $X$,
a partial instantiation is sufficient
so long as it determines the assignments on the other variables.
}
The witness could be obtained by performing SAT solving on the formula
$\reduce{\monopdef{\atom{p}}}{\modelOver \cup \overline{\atom{p}}}$,
(where
$\modelOver$ is the extension of
$\modelAssign$ by assigning unassigned input variables in $A$ to $\top$
(Sec.~\ref{subsec:SMMT})).
However, in practice,
a better approach is to modify the SMMT solver to produce
the witness during the derivation of theory lemmas.
In Section~\ref{sec:buildEncoding},
we provide examples of witnesses for commonly
used monotonic predicates.

Note that the witness is not part of the trusted foundation for the proof.
An incorrect witness might not support verification of a theory lemma,
but if a theory lemma is verified using a specific witness
$\modelX$, Theorem~\ref{thm:lemmaHorn} guarantees that the lemma is valid.

\begin{exmp}\label{example:lemmaspecificHUB}
Continuing the example,
let a theory lemma
$L := \atom{c} \vee \atom{d} \vee \overline{\atom{reach_s^{t}}}$
be given.
To derive a lemma-specific Horn upper-bound for
$\pdef{\overline{\atom{reach_s^{t}}}}$, we first obtain
a witness $\modelX$ by finding a satisfying assignment to
the formula $\hpdef{\atom{reach_{s}^{t}}} \wedge M \wedge
\overline{\atom{reach_{s}^{t}}}$,
where $M := \{\atom{a}, \atom{b}, \overline{\atom{c}}, \overline{\atom{d}}, \atom{e},
\atom{f}, \atom{g}, \atom{h}\}$
(by assigning the unassigned input variables in $L$ to $\top$).
Since $M$ is a complete assignment to the edge variables, the graph
is fully specified, and a suitable
witness $\modelX$ can be efficiently computed using a standard
graph-reachability algorithm, to compute the reachability status of
each vertex.
The witness
$\modelX$  is $\{ \atom{s}, \atom{v1}, \atom{v2}, \overline{\atom{v3}},
\overline{\atom{v4}}, \overline{\atom{t}}\}$.
Following the construction in Theorem~\ref{thm:lemmaHorn},
the formula
$\reduce{\hpdef{\atom{reach_s^t}}}{\:\overline{\atom{reach_s^t}} \cup \modelX}$
simplifies to two (unit) clauses:
$\overline{\atom{c}}$ and  $\overline{\atom{d}}$
(from clauses (2) and (6) in Ex.~\ref{example:reachability}),
which can be visualized as the cut in Fig.~\ref{fig:graph} (right).
The lemma-specific Horn upper bound
$\reduce{\hpdef{\atom{reach_s^t}}}{\:\overline{\atom{reach_s^t}} \cup
\modelX} \Rightarrow \overline{\atom{reach_s^t}}$ is, therefore,
$\overline{\atom{c}}\wedge\overline{\atom{d}}\Rightarrow \overline{\atom{reach_s^t}}$, which in this example is already CNF, but more generally, we would
introduce two literals to encode the implication:
$\{ \atom{c}
\vee \overline{\atom{l_1}}, \;  \atom{d} \vee \overline{\atom{l_2}},
\; \atom{l_1}\vee\atom{l_2} \vee \overline{\atom{reach_s^t}} \}$.
The lemma-specific Horn upper-bound is
dual Horn and implies the theory lemma $L$ by unit propagation.
\end{exmp}

\noindent
From the lemma-specific Horn upper-bounds,
we construct the proof-specific Horn definition
by combining the lemma-specific Horn upper-bounds for all lemmas
in the proof obligations.

In summary, to efficiently verify SMMT theory lemmas, we propose the
following approach:
(1)~define the propositional definitions (in CNF)
for the atoms of theory predicates;
(2)~transform the definitions into monotonic definitions offline;
(3)~during proof checking, approximate
a proof-specific Horn definition (if not already Horn) from
the constructed monotonic definition using theory lemmas in the
proof;
(4)~combine the proof-specific definition together and verify
the proof via RUP.
The only theory-specific, trusted foundation for the proof is the
definition for the theory atoms. 
\begin{ProceedingsVersion}
(The extended version of this paper contains a figure to help
visualize this workflow.)
\end{ProceedingsVersion}
\begin{ExtendedVersion}
Appendix~\ref{appendix:visualworkflow} visualizes this workflow.
\end{ExtendedVersion}

\begin{exmp}
Summarizing,
the positive propositional definition $\pdef{\atom{reach_s^t}}$
in Ex.~\ref{example:reachability} is already Horn, so
is sufficient for verifying via DRAT any SMMT lemmas that
imply $\atom{reach_s^t}$.
To verify lemmas that imply $\overline{\atom{reach_s^t}}$, we can compute
a proof-specific definition of $\overline{\atom{reach_s^t}}$ from
$\pdef{\atom{reach_s^t}}$ using Theorem~\ref{thm:lemmaHorn}.

\end{exmp}

\begin{remark}
The only trusted basis of our approach are the propositional definitions
of theory atoms. For the monotonic theories 
in the section~\ref{sec:buildEncoding},
we considered the definitions intuitively understandable, and
therefore sufficiently trustworthy.
But to further increase confidence,
propositional definitions can be validated using techniques
from hardware validation/verification, e.g., simulation to sanity-check
general behavior, equivalence checking against known-good circuits, etc.
\end{remark}

\section{Example Propositional Definitions}
\label{sec:buildEncoding}

In this section, we illustrate the monotonic definitions
for the most commonly used monotonic predicates.
Due to space constraints, we present only graph reachability here in detail,
and only sketch bit-vector comparison and summation, and max-flow.
\begin{ProceedingsVersion}
Full definitions for those theories are in the extended version of this paper.
\end{ProceedingsVersion}
\begin{ExtendedVersion}
Full definitions for those theories are in Appendix~\ref{sec:morencodings}.
\end{ExtendedVersion}

\textbf{Graph Reachability:}
Given a graph $G = (V, E)$ where $V$ and $E$ are sets of vertices and edges, respectively, as discussed in Sect.~\ref{sec:background}, the graph reachability theory contains the reachability predicate $reach_{u}^{v}$ for $u, v \in V$ over input $e_1, e_2 \ldots e_n \in E$. For convenience, we refer to the positive edge atom for the edge from vertex $i$ to vertex $j$ as $\eatom{i}{j}$. The predicate is positively monotonic for $E$, and the monotonic definition for the positive predicate atom $\reachatom$ contains the clauses:
\begin{enumerate}
    \item $\overline{reach^{i}} \vee \overline{\eatom{i}{j}} \vee reach^{j}$ for every edge $e_i^j \in E$ and the unit clause $reach^{u}$
    \item $\overline{reach^{v}} \vee \reachatom$ 
\end{enumerate}

The monotonic definition introduces a reachability atom $reach^{i}$ for every $i \in V$ and asserts the fact that $u$ is reachable from itself. For every edge $(i,j)$, if the edge $(i, j)$ is enabled ($\eatom{i}{j}$) and $i$ is reachable ($reach^{i}$), then $j$ must also be reachable ($reach^{j}$). The predicate atom $\reachatom$ is implied by the reachability of $v$ ($reach^v$). The definition is monotonic since it only contains negative edge atoms. Moreover, the definition is already a Horn definition and can be used directly for proving theory lemmas in the theory of $\reachatom$ without the need for transformation into a proof-specific Horn definition. The size of the definition is $O(|E|)$.

Instead of defining the monotonic definition for the negative
predicate atom $\overline{\reachatom}$, we construct its proof-specific
definition from the monotonic definition of the positive predicate
atom $\reachatom$. For each theory lemma in the proof, the witness for
constructing the lemma-specific Horn upper-bound is the reachability
status ($reach^{i}$) of every vertex $i \in V$, which is efficiently
computed in the SMMT solver using standard graph-reachability algorithms.

\textbf{Bit-Vector Comparison} (sketch):
The positive definition is just the Tseitin encoding of a typical
bit-vector comparison circuit, with some simplification due to being
one-sided:
For each bit position $i$, we introduce auxiliary variables $ge_i$ and
$gt_i$, which indicate that the more-significant bits from this position
have already determined vector $\vec{a}$ to be $\ge$ or $>$ $\vec{b}$, respectively.
Simple clauses compute $ge_{i-1}$ and $gt_{i-1}$ from $ge_i$ and $gt_i$
and the bits at position $i-1$ of $\vec{a}$ and $\vec{b}$.
The negative definition is similar.  These are both Horn, so can be
used without further transformation into proof-specific Horn definitions.

\textbf{Bit-Vector Summation and Comparison} (sketch):
These are basically Tseitin encodings of ripple-carry adders, combined
with the comparison theory above --- using Def.~\ref{def:monot} to handle
the fact that the
the Tseitin encodings of the XOR gates in the 
adders are non-monotonic with respect to the input bit-vectors.
The resulting propositional definitions are not Horn, so we use
witnesses to construct lemma-specific Horn definitions.
The witnesses come from the SMMT solver maintaining
lower and upper bounds on the possible values of the bit-vectors, e.g.,
a witness for
$\sum{\Vec{A}}\ge\sum{\Vec{B}}$ are lower bounds for the vectors in $\Vec{A}$ and upper bounds for  the vectors in $\Vec{B}$ 
such that their sums make the inequality true.
(\textit{Mutadis mutandis}
for the negative witness.)

\textbf{Max-Flow} (sketch):
For the positive definition (that the max-flow exceeds some value),
we introduce auxiliary bit-vectors to capture the flow asisgned to
each edge.  We use the bit-vector theories to ensure that the flows
do not exceed the edge capacities, that each node's (except the source)
outgoing flows do not exceed the incoming flows (equality is
unnecessary due to the one-sidedness), and that the flow to the sink
exceeds the target value.
For the negative definition, we exploit the famous max-flow/min-cut
duality.  We introduce an auxiliary variable $incut_e$ for each
edge.  We use the graph reachability theory to ensure that the edges
in the cut separate the source from the sink, and the bit-vector
summation theory to ensure that the capacity of the cut does not
exceed the target max-flow value.
Both the positive and negative definitions are not Horn, so
require instantiation-based upper-bounds.  The witnesses are
the flow values or the cuts, and are easily computed by the SMMT solver.

\section{Experimental Evaluation}
\label{sec:evaluation}

To evaluate our proposed method, we implemented it as shown earlier in
Fig.~\ref{fig:workflow} (Sec.~\ref{sec:overview}).
We call our implementation \textit{MonoProof} (available at
\url{https://github.com/NickF0211/MonoProof}).

The two basic questions of any proof-generating SAT/SMT solver are:
(1) how much overhead does the support for proofs add to the solving time,
and (2) how efficiently can a proof be prepared from the proof log,
and verified?
For the first question,
we compare the runtime of unmodified MonoSAT
versus the MonoSAT that we have extended to produce proof certificates.
For the second question,
we need a baseline of comparison.
MonoProof is the first proof-generating SMMT solver, so there
is no obvious comparison.
However, since SMMT theories are finite-domain,
and bit-blasting
(i.e., adding clauses that
encode the theory predicates to the problem instance and solving
via a propositional SAT solver)
is a standard technique for finite-domain theories,
we compare against bit-blasting.
Arguably, this comparison is unfair, since
MonoSAT outperforms bit-blasting when \textit{solving}
SMMT theories~\cite{bayless2015sat}.
Thus, as an additional baseline, we propose an obvious hybrid of
SMMT and bit-blasting, which we dub Lemma-Specific Bit-Blasting (LSBB):
we run MonoProof until the core theory lemmas have been extracted,
benefitting from MonoSAT's fast solving time,
but then instead of using our techniques from Sec.~\ref{sec:verifyTheory},
we bit-blast only the core theory lemmas.\footnote{We implemented this both via
separate SAT calls per lemma; and also by providing all lemmas in a single
SAT call (with auxiliary variables to encode the resulting DNF),
to allow the solver to re-use learned clauses on different lemmas.
The latter approach generally worked better, so we report those results,
but (spoiler) neither worked well.}

We ran experiments on 3GHZ AMD Epyc 7302 CPUs with 512GB of DDR4 RAM,
with a timeout of $1$ hour and memory limit of 64GB.
For the bit-blasting SAT solver, we use
the state-of-the-art SAT solver Kissat~\cite{fleury2020cadical}.
In all cases, the proof is verified with standard
DRAT-trim~\cite{DBLP:conf/sat/WetzlerHH14}.

\subsection{Benchmarks}

We wish to evaluate scalability on real, industrial problems arising
in practice.
MonoProof has successfully generated and
verified industrial UNSAT proofs for a set of
hard, unsatisfiable
Tiros~\cite{backes2019reachability,bayless2021debugging}
queries
collected in production use at AWS over a multi-week period.
However, these instances are proprietary and
cannot be published,
making them irreproducible by others.
Instead, we evaluate on two sets of benchmarks that we can
publicly release
(also at \url{https://github.com/NickF0211/MonoProof}):

\boldparagraph{Network Reachability Benchmarks}
These are synthetic benchmarks that mimic
the real-world problems solved by Tiros,
without disclosing any proprietary information.
Network reachability 
is the problem of determining whether a given pair of network resources
(source and destination) can communicate.
The problem is challenging because network
components can intercept, transform, and optionally re-transmit packets
traveling through the network (e.g., a firewall or a NAT gateway).
Network
components come in various types, each with their own complex behaviors and
user-configurable network controls.
In these benchmarks, we abstract to two types of intermediate components:
simple and transforming.
Simple components relay an incoming packet as long as its destination address
belongs to a certain domain, expressed in terms of a network CIDR (Classless
Interdomain Routing), e.g., 10.0.0.0/24.
Transforming network components intercept an incoming packet and rewrite the
source address and ports to match their own before re-transmitting it.
The simple network components are akin to subnets, VPCs, and
peering connections; transforming network
components are a highly abstracted version of load balancers,
NAT gateways, firewalls, etc.
The SMT encoding 
uses the theories of bit vectors and of graph reachability.
The
network packets are symbolically represented using bit vectors,
and the network is modeled as a symbolic graph.
Network behavior is
modeled as logical relations between packets and elements in the
network graph.  Unsatisfiability of a query corresponds to
unreachability in the network:
for all possible packet headers that the source could generate, and for all
possible paths connecting the source to the destination, the combined
effect of packet transformations and network controls placed along the
path cause the packet to be dropped from the network before it
reaches its destination.

We generated $24$ instances in total, varying the size and structure of the randomly
generated network.
Graph sizes ranged from 1513 to 15524 (average 5485) symbolic edges.

\boldparagraph{Escape Routing Benchmarks}
Escape routing is the problem of routing all the signals from a component
with extremely densely packed I/O connections (e.g., the solder bumps
on a Ball-Grid Array (BGA)) to the periphery of the component, where
other routing techniques can be used.
For a single-layer printed circuit board (PCB), escape routing
is optimally solvable via max-flow, but real chips typically require
multiple layers.
The multi-layer problem is difficult because the vias (connections
between layers) are wider than the wires on a layer, disrupting what
routes are possible on that layer.
Bayless et al.~\cite{DBLP:conf/iccad/BaylessHH16} proposed a
state-of-the-art solution using SMMT: max-flow predicates determine
routability for each layer on symbolic graphs, whose edges are
enabled/disabled by logical constraints capturing the design rules for vias.

In~\cite{DBLP:conf/iccad/BaylessHH16},
24 commercial BGAs were analyzed under two different via technologies and
different numbers of layers.
For our benchmark set,
we select all configurations where the \textit{provable} minimum number
of layers were reported.
This results in 24 unsatisfiable SMMT problems instances
(routing with one fewer layer than the minimum),
which exercise the bit-vector and max-flow theories.
Graph sizes ranged from 193994 to 3084986 (average 717705) symbolic edges.

\subsection{Results}

\begin{figure}[tbp]
    \centering
    \hspace*{-0.30in}
    \includegraphics[width=0.54\textwidth]{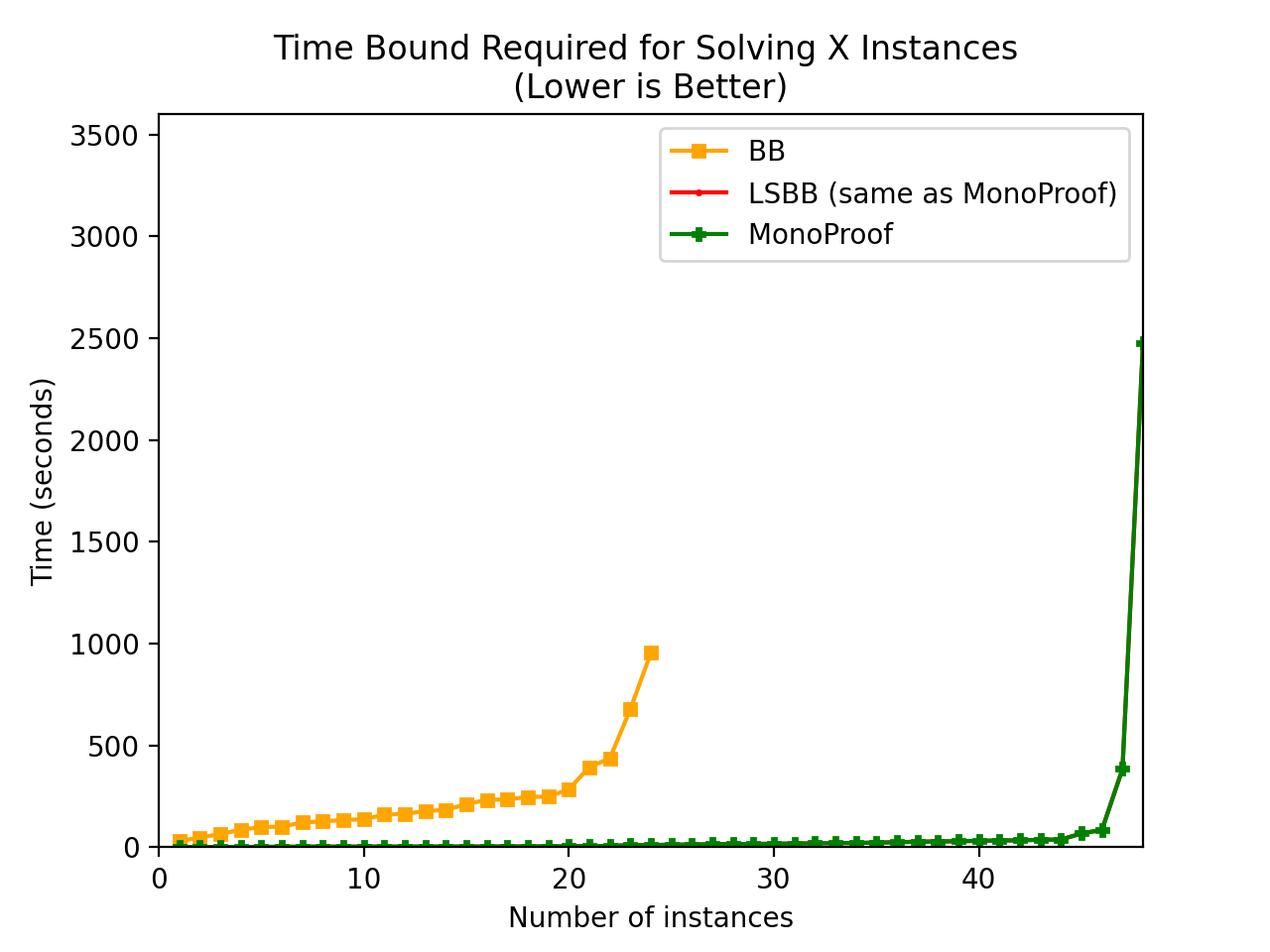}
    \hspace*{-0.30in}
    \includegraphics[width=0.54\textwidth]{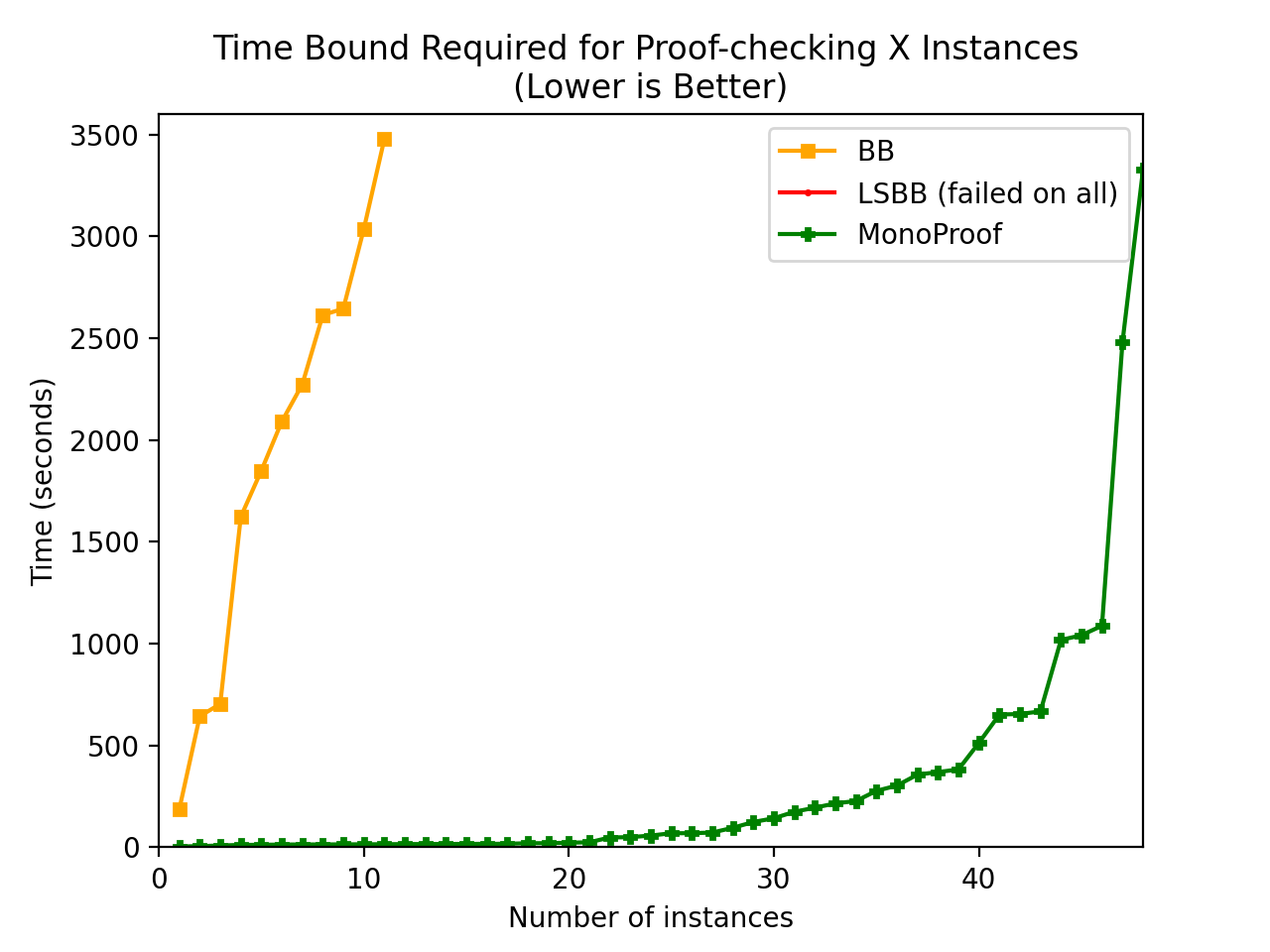}
    \caption{Cactus Plots for Solving (left) and Proof Preparation\&Checking (right).
    Each point is the runtime for one instance, so the plot shows the
    number of instances ($x$-axis) that ran in less than any time bound
    ($y$-axis).
    BB denotes standard bit-blasting; LSBB, lemma-specific bit-blasting; and
    MonoProof is our new method.
    The left graph shows that MonoProof (and LSBB, which uses MonoProof's
    solver) is vastly faster than bit-blasting for solving the instances.
    The right graph shows that MonoProof is also vastly faster than
    bit-blasting for proving the result; LSBB timed-out on all proofs.
    }
    \label{fig:compare}
\end{figure}

Returning to the two questions for our evaluation:

\noindent
1. \textit{The solver overhead of our proof certificate generation is minimal.}
On the network reachability benchmarks, the geometric mean (GM) runtime
overhead was 14.10\% (worst case 28.8\%).
On the escape routing benchmarks, the GM runtime
overhead was only 1.11\% (the worst case 5.71\%).
(The lower overhead is because MonoSAT spent more time learning
theory lemmas vs.\ recording them in the proof.)
The overall GM runtime overhead across all benchmarks was 7.41\%.
These overhead figures are comparable to state-of-the-art, proof-generating
SAT solvers, which is not surprising, since our proof certificates
are essentially the same as a DRAT proof certificate in SAT.
This compares favorably with the solver overhead of heavier-weight, richer,
and more expressive SMT proof certificates like LFSC~\cite{DBLP:conf/sat/OzdemirNPZB19}.

\noindent
2. \textit{MonoProof's time to prepare and check a proof of unsatisfiability is
markedly faster than standard bit-blasting or lemma-specific bit-blasting.}
\begin{ProceedingsVersion}
Fig.~\ref{fig:compare} summarizes our results. (A full table
is in the extended version of this paper.) 
\end{ProceedingsVersion}
\begin{ExtendedVersion}
Fig.~\ref{fig:compare} summarizes our results (full table
in Appendix~\ref{appendix:result-table}).
\end{ExtendedVersion}
The left graph shows solving times (with proof logging).
Since the proof-logging overhead is so low for both bit-blasting
(Kissat generating DRAT) and MonoProof, these results are consistent
with prior work showing the superiority of the SMMT approach
for \textit{solving}~\cite{bayless2015sat}.
Note that bit-blasting (BB) solved all 24 network reachability instances,
but failed to solve any of the 24 escape routing instances in the 1hr timeout.
Lemma-specific bit-blasting (LSBB) and MonoProof share the same solving and
proof-logging steps.
The right graph shows proof-checking times (including BackwardCheck
and proof-specific Horn upper-bound construction for MonoProof).
Here, BB could proof-check only 11/24 reachability instances that it had solved.
Restricting to only the 11 instances that BB proof-checked, MonoProof
was
at least $3.7\times$ and geometric mean (GM) $10.2\times$ faster.
LSBB timed out on all 48 instances.
Summarizing,
MonoProof solved and proved all 48 instances, whereas BB managed only
11 instances, and LSBB failed to prove any.

The above results were with our modified BackwardCheck enabled
(\textit{drat-trim-theory} in Fig.~\ref{fig:workflow}).
Interestingly, with BackwardCheck disabled, MonoProof ran
even faster on 37/48 benchmarks
(min speedup $1.03\times$, max $6.6\times$, GM $1.7\times$).
However, enabling BackwardCheck ran faster in 
10/48 cases
(min speedup $1.02\times$, max $7.9\times$, GM $1.6\times$),
and proof-checked one additional instance
(69 sec.\ vs.\ 1hr timeout).
The modified BackwardCheck is a useful option to have available.

\section{Conclusion} \label{sec:conclusion}
We have introduced the first efficient proof-generating method for SMMT.
Our approach uses propositional definitions of the theory semantics
and derives compact, proof-specific Horn-approximations sufficient to
verify the theory lemmas via RUP.
The resulting pure DRAT proofs are
checkable via well-established (and even machine verified) tools.
We give definitions for the most common SMMT theories,
and experimental results on industrial-scale problems demonstrate
that the solving overhead is minimal, and
the proof preparation and
checking times are vastly faster than the alternative of bit-blasting.

The immediate line of future work is to support additional finite domain monotonic
theories, such as richer properties on pseudo-boolean reasoning.
We also aim to apply our approach to support monotonic theories beyond
finite domains.
In addition, we plan to extend our proof support to emerging proof format such as
LRAT~\cite{DBLP:conf/cade/Cruz-FilipeHHKS17} and
FRAT~\cite{DBLP:conf/tacas/BaekCH21}
that enable faster proof checking.

\subsection*{Acknowledgments}
Nick Feng was supported in part by an Amazon Research Award.
Alan Hu was supported in part by a Discovery Grant from the
Natural Sciences and Engineering Research Council of Canada (NSERC).
The authors would like to thank Marijn Heule for insightful feedback and mentorship during Nick's internship at Amazon Web Services in 2021.
We also thank Dan Dacosta, Nadia Labai, and Nate Launchbury for reviewing earlier drafts of this work.

%
%
%
%

\bibliographystyle{splncs04}
\bibliography{reference}

\begin{thebibliography}{10}
\providecommand{\url}[1]{\texttt{#1}}
\providecommand{\urlprefix}{URL }
\providecommand{\doi}[1]{https://doi.org/#1}

\bibitem{DBLP:conf/cpp/ArmandFGKTW11}
Armand, M., Faure, G., Gr{\'{e}}goire, B., Keller, C., Th{\'{e}}ry, L., Werner, B.: {A Modular Integration of {SAT/SMT} Solvers to Coq through Proof Witnesses}. In: Jouannaud, J., Shao, Z. (eds.) Certified Programs and Proofs - First International Conference, {CPP} 2011, Kenting, Taiwan, December 7-9, 2011. Proceedings. Lecture Notes in Computer Science, vol.~7086, pp. 135--150. Springer (2011). \doi{10.1007/978-3-642-25379-9\_12}, \url{https://doi.org/10.1007/978-3-642-25379-9\_12}

\bibitem{backes2019reachability}
Backes, J., Bayless, S., Cook, B., Dodge, C., Gacek, A., Hu, A.J., Kahsai, T., Kocik, B., Kotelnikov, E., Kukovec, J., McLaughlin, S., Reed, J., Rungta, N., Sizemore, J., Stalzer, M., Srinivasan, P., Suboti{\'{c}}, P., Varming, C., Whaley, B.: Reachability analysis for {AWS}-based networks. In: Dillig, I., Tasiran, S. (eds.) International Conference on Computer Aided Verification (CAV). pp. 231--241. Springer (2019)

\bibitem{DBLP:conf/tacas/BaekCH21}
Baek, S., Carneiro, M., Heule, M.J.H.: {A Flexible Proof Format for {SAT} Solver-Elaborator Communication}. In: Groote, J.F., Larsen, K.G. (eds.) Tools and Algorithms for the Construction and Analysis of Systems --- 27th International Conference, {TACAS} 2021. Lecture Notes in Computer Science, vol. 12651, pp. 59--75. Springer (2021). \doi{10.1007/978-3-030-72016-2\_4}, \url{https://doi.org/10.1007/978-3-030-72016-2\_4}

\bibitem{barbosa2019better}
Barbosa, H., Blanchette, J., Fleury, M., Fontaine, P., Schurr, H.J.: {Better SMT proofs for easier reconstruction}. In: AITP 2019-4th Conference on Artificial Intelligence and Theorem Proving (2019)

\bibitem{DBLP:conf/cade/BarbosaBF17}
Barbosa, H., Blanchette, J.C., Fontaine, P.: {Scalable Fine-Grained Proofs for Formula Processing}. In: de~Moura, L. (ed.) Automated Deduction - {CADE} 26 - 26th International Conference on Automated Deduction, Gothenburg, Sweden, August 6-11, 2017, Proceedings. Lecture Notes in Computer Science, vol. 10395, pp. 398--412. Springer (2017). \doi{10.1007/978-3-319-63046-5\_25}, \url{https://doi.org/10.1007/978-3-319-63046-5\_25}

\bibitem{barrett2015proofs}
Barrett, C., De~Moura, L., Fontaine, P.: {Proofs in satisfiability modulo theories}. All about proofs, Proofs for all  \textbf{55}(1),  23--44 (2015)

\bibitem{DBLP:conf/cav/BarrettCDHJKRT11}
Barrett, C.W., Conway, C.L., Deters, M., Hadarean, L., Jovanovic, D., King, T., Reynolds, A., Tinelli, C.: {CVC4}. In: Gopalakrishnan, G., Qadeer, S. (eds.) Computer Aided Verification - 23rd International Conference, {CAV} 2011, Snowbird, UT, USA, July 14-20, 2011. Proceedings. Lecture Notes in Computer Science, vol.~6806, pp. 171--177. Springer (2011). \doi{10.1007/978-3-642-22110-1\_14}, \url{https://doi.org/10.1007/978-3-642-22110-1\_14}

\bibitem{bayless2021debugging}
Bayless, S., Backes, J., DaCosta, D., Jones, B., Launchbury, N., Trentin, P., Jewell, K., Joshi, S., Zeng, M., Mathews, N.: {Debugging Network Reachability with Blocked Paths}. In: International Conference on Computer Aided Verification (CAV). pp. 851--862. Springer (2021)

\bibitem{bayless2015sat}
Bayless, S., Bayless, N., Hoos, H., Hu, A.: {SAT} modulo monotonic theories. In: Proceedings of the AAAI Conference on Artificial Intelligence. vol.~29 (2015)

\bibitem{DBLP:conf/aaai/BaylessBHH15}
Bayless, S., Bayless, N., Hoos, H.H., Hu, A.J.: {{SAT} Modulo Monotonic Theories}. In: Bonet, B., Koenig, S. (eds.) Proceedings of the Twenty-Ninth {AAAI} Conference on Artificial Intelligence, January 25-30, 2015, Austin, Texas, {USA}. pp. 3702--3709. {AAAI} Press (2015), \url{http://www.aaai.org/ocs/index.php/AAAI/AAAI15/paper/view/9951}

\bibitem{DBLP:conf/iccad/BaylessHH16}
Bayless, S., Hoos, H.H., Hu, A.J.: {Scalable, high-quality, SAT-based multi-layer escape routing}. In: Liu, F. (ed.) Proceedings of the 35th International Conference on Computer-Aided Design, {ICCAD} 2016, Austin, TX, USA, November 7-10, 2016. p.~22. {ACM} (2016). \doi{10.1145/2966986.2967072}, \url{https://doi.org/10.1145/2966986.2967072}

\bibitem{10.1016/j.artint.2019.103196}
Bayless, S., Kodirov, N., Iqbal, S.M., Beschastnikh, I., Hoos, H.H., Hu, A.J.: {Scalable Constraint-Based Virtual Data Center Allocation}. Artif. Intell.  \textbf{278}(C) (jan 2020). \doi{10.1016/j.artint.2019.103196}, \url{https://doi.org/10.1016/j.artint.2019.103196}

\bibitem{fleury2020cadical}
Biere, A., Fazekas, K., Fleury, M., Heisinger, M.: {{CaDiCaL}, {Kissat}, {Paracooba}, {Plingeling}, and {Treengeling} Entering the {SAT} Competition 2020}. In: Balyo, T., Froleyks, N., Heule, M., Iser, M., J\"arvisalo, M., Suda, M. (eds.) {SAT Competition 2020} --- Solver and Benchmark Descriptions. Department of Computer Science Report Series B, vol. B-2020-1, pp. 51--53. University of Helsinki (2020)

\bibitem{bohme2009proof}
B{\"o}hme, S.: {Proof reconstruction for Z3 in Isabelle/HOL}. In: 7th International Workshop on Satisfiability Modulo Theories (SMT’09) (2009)

\bibitem{DBLP:conf/itp/BohmeW10}
B{\"{o}}hme, S., Weber, T.: {Fast LCF-Style Proof Reconstruction for {Z3}}. In: Kaufmann, M., Paulson, L.C. (eds.) Interactive Theorem Proving, First International Conference, {ITP} 2010, Edinburgh, UK, July 11-14, 2010. Proceedings. Lecture Notes in Computer Science, vol.~6172, pp. 179--194. Springer (2010). \doi{10.1007/978-3-642-14052-5\_14}, \url{https://doi.org/10.1007/978-3-642-14052-5\_14}

\bibitem{DBLP:conf/cade/BoutonODF09}
Bouton, T., Oliveira, D.C.B.D., D{\'{e}}harbe, D., Fontaine, P.: {veriT: An Open, Trustable and Efficient SMT-Solver}. In: Schmidt, R.A. (ed.) Automated Deduction - CADE-22, 22nd International Conference on Automated Deduction, Montreal, Canada, August 2-7, 2009. Proceedings. Lecture Notes in Computer Science, vol.~5663, pp. 151--156. Springer (2009). \doi{10.1007/978-3-642-02959-2\_12}, \url{https://doi.org/10.1007/978-3-642-02959-2\_12}

\bibitem{DBLP:conf/spin/ChristHN12}
Christ, J., Hoenicke, J., Nutz, A.: {SMTInterpol: An Interpolating {SMT} Solver}. In: Donaldson, A.F., Parker, D. (eds.) Model Checking Software - 19th International Workshop, {SPIN} 2012, Oxford, UK, July 23-24, 2012. Proceedings. Lecture Notes in Computer Science, vol.~7385, pp. 248--254. Springer (2012). \doi{10.1007/978-3-642-31759-0\_19}, \url{https://doi.org/10.1007/978-3-642-31759-0\_19}

\bibitem{DBLP:conf/cade/Cruz-FilipeHHKS17}
Cruz{-}Filipe, L., Heule, M.J.H., Jr., W.A.H., Kaufmann, M., Schneider{-}Kamp, P.: {Efficient Certified {RAT} Verification}. In: de~Moura, L. (ed.) Automated Deduction - {CADE} 26 - 26th International Conference on Automated Deduction, Gothenburg, Sweden, August 6-11, 2017, Proceedings. Lecture Notes in Computer Science, vol. 10395, pp. 220--236. Springer (2017). \doi{10.1007/978-3-319-63046-5\_14}, \url{https://doi.org/10.1007/978-3-319-63046-5\_14}

\bibitem{10.1007/978-3-662-54577-5_7}
Cruz-Filipe, L., Marques-Silva, J., Schneider-Kamp, P.: {Efficient Certified Resolution Proof Checking}. In: Legay, A., Margaria, T. (eds.) Tools and Algorithms for the Construction and Analysis of Systems. pp. 118--135. Springer Berlin Heidelberg, Berlin, Heidelberg (2017)

\bibitem{DBLP:journals/jacm/DavisP60}
Davis, M., Putnam, H.: {A Computing Procedure for Quantification Theory}. J. {ACM}  \textbf{7}(3),  201--215 (1960). \doi{10.1145/321033.321034}, \url{http://doi.acm.org/10.1145/321033.321034}

\bibitem{DBLP:journals/corr/abs-1908-09480}
Fleury, M., Schurr, H.: {Reconstructing {veriT} Proofs in {Isabelle/HOL}}. In: Reis, G., Barbosa, H. (eds.) Proceedings Sixth Workshop on Proof eXchange for Theorem Proving, PxTP 2019, Natal, Brazil, August 26, 2019. {EPTCS}, vol.~301, pp. 36--50 (2019). \doi{10.4204/EPTCS.301.6}, \url{https://doi.org/10.4204/EPTCS.301.6}

\bibitem{ganzinger2004dpll}
Ganzinger, H., Hagen, G., Nieuwenhuis, R., Oliveras, A., Tinelli, C.: {DPLL (T): Fast decision procedures}. In: International Conference on Computer Aided Verification. pp. 175--188. Springer (2004)

\bibitem{DBLP:journals/jar/GieslABEFFHOPSS17}
Giesl, J., Aschermann, C., Brockschmidt, M., Emmes, F., Frohn, F., Fuhs, C., Hensel, J., Otto, C., Pl{\"{u}}cker, M., Schneider{-}Kamp, P., Str{\"{o}}der, T., Swiderski, S., Thiemann, R.: {Analyzing Program Termination and Complexity Automatically with AProVE}. J. Autom. Reason.  \textbf{58}(1),  3--31 (2017). \doi{10.1007/s10817-016-9388-y}, \url{https://doi.org/10.1007/s10817-016-9388-y}

\bibitem{10.5555/789083.1022836}
Goldberg, E., Novikov, Y.: {Verification of Proofs of Unsatisfiability for CNF Formulas}. In: Proceedings of the Conference on Design, Automation and Test in Europe - Volume 1. p. 10886. DATE '03, IEEE Computer Society, USA (2003)

\bibitem{DBLP:conf/fmcad/GurfinkelV14}
Gurfinkel, A., Vizel, Y.: {DRUPing for interpolates}. In: Formal Methods in Computer-Aided Design, {FMCAD} 2014, Lausanne, Switzerland, October 21-24, 2014. pp. 99--106. {IEEE} (2014). \doi{10.1109/FMCAD.2014.6987601}, \url{https://doi.org/10.1109/FMCAD.2014.6987601}

\bibitem{10.1007/s10817-019-09516-0}
Heule, M.J.H., Kiesl, B., Biere, A.: {Strong Extension-Free Proof Systems}. J. Automated Reasoning  \textbf{64},  533--554 (2020). \doi{10.1007/s10817-019-09516-0}

\bibitem{heule2013trimming}
Heule, M.J., Hunt, W.A., Wetzler, N.: {Trimming while checking clausal proofs}. In: 2013 Formal Methods in Computer-Aided Design. pp. 181--188. IEEE (2013)

\bibitem{10.1007/978-3-319-41528-4_8}
Klenze, T., Bayless, S., Hu, A.J.: {Fast, Flexible, and Minimal CTL Synthesis via SMT}. In: Chaudhuri, S., Farzan, A. (eds.) Computer Aided Verification. pp. 136--156. Springer International Publishing, Cham (2016)

\bibitem{DBLP:conf/tacas/LuckowDGHIKRR16}
Luckow, K.S., Dimjasevic, M., Giannakopoulou, D., Howar, F., Isberner, M., Kahsai, T., Rakamaric, Z., Raman, V.: {JDart: {A} Dynamic Symbolic Analysis Framework}. In: Chechik, M., Raskin, J. (eds.) Tools and Algorithms for the Construction and Analysis of Systems - 22nd International Conference, {TACAS} 2016, Held as Part of the European Joint Conferences on Theory and Practice of Software, {ETAPS} 2016, Eindhoven, The Netherlands, April 2-8, 2016, Proceedings. Lecture Notes in Computer Science, vol.~9636, pp. 442--459. Springer (2016). \doi{10.1007/978-3-662-49674-9\_26}, \url{https://doi.org/10.1007/978-3-662-49674-9\_26}

\bibitem{DBLP:conf/cav/McMillan03}
McMillan, K.L.: {Interpolation and SAT-Based Model Checking}. In: Jr., W.A.H., Somenzi, F. (eds.) Computer Aided Verification, 15th International Conference, {CAV} 2003, Boulder, CO, USA, July 8-12, 2003, Proceedings. Lecture Notes in Computer Science, vol.~2725, pp. 1--13. Springer (2003). \doi{10.1007/978-3-540-45069-6\_1}, \url{https://doi.org/10.1007/978-3-540-45069-6\_1}

\bibitem{DBLP:conf/lpar/MouraB08}
de~Moura, L.M., Bj{\o}rner, N.: {Proofs and Refutations, and {Z3}}. In: Rudnicki, P., Sutcliffe, G., Konev, B., Schmidt, R.A., Schulz, S. (eds.) Proceedings of the {LPAR} 2008 Workshops, Knowledge Exchange: Automated Provers and Proof Assistants, and the 7th International Workshop on the Implementation of Logics, Doha, Qatar, November 22, 2008. {CEUR} Workshop Proceedings, vol.~418. CEUR-WS.org (2008), \url{http://ceur-ws.org/Vol-418/paper10.pdf}

\bibitem{DBLP:conf/tacas/MouraB08}
de~Moura, L.M., Bj{\o}rner, N.: {{Z3:} An Efficient {SMT} Solver}. In: Ramakrishnan, C.R., Rehof, J. (eds.) Tools and Algorithms for the Construction and Analysis of Systems, 14th International Conference, {TACAS} 2008, Held as Part of the Joint European Conferences on Theory and Practice of Software, {ETAPS} 2008, Budapest, Hungary, March 29-April 6, 2008. Proceedings. Lecture Notes in Computer Science, vol.~4963, pp. 337--340. Springer (2008). \doi{10.1007/978-3-540-78800-3\_24}, \url{https://doi.org/10.1007/978-3-540-78800-3\_24}

\bibitem{DBLP:conf/dac/OtoniBEHS21}
Otoni, R., Blicha, M., Eugster, P., Hyv{\"{a}}rinen, A.E.J., Sharygina, N.: {Theory-Specific Proof Steps Witnessing Correctness of {SMT} Executions}. In: 58th {ACM/IEEE} Design Automation Conference, {DAC} 2021, San Francisco, CA, USA, December 5-9, 2021. pp. 541--546. {IEEE} (2021). \doi{10.1109/DAC18074.2021.9586272}, \url{https://doi.org/10.1109/DAC18074.2021.9586272}

\bibitem{DBLP:conf/sat/OzdemirNPZB19}
Ozdemir, A., Niemetz, A., Preiner, M., Zohar, Y., Barrett, C.W.: {{DRAT}-based Bit-Vector Proofs in {CVC4}}. In: Janota, M., Lynce, I. (eds.) Theory and Applications of Satisfiability Testing - {SAT} 2019 - 22nd International Conference, {SAT} 2019, Lisbon, Portugal, July 9-12, 2019, Proceedings. Lecture Notes in Computer Science, vol. 11628, pp. 298--305. Springer (2019). \doi{10.1007/978-3-030-24258-9\_21}, \url{https://doi.org/10.1007/978-3-030-24258-9\_21}

\bibitem{DBLP:conf/aaai/SelmanK91}
Selman, B., Kautz, H.A.: {Knowledge Compilation using Horn Approximations}. In: Dean, T.L., McKeown, K.R. (eds.) Proceedings of the 9th National Conference on Artificial Intelligence, Anaheim, CA, USA, July 14-19, 1991, Volume 2. pp. 904--909. {AAAI} Press / The {MIT} Press (1991), \url{http://www.aaai.org/Library/AAAI/1991/aaai91-140.php}

\bibitem{DBLP:journals/fmsd/StumpORHT13}
Stump, A., Oe, D., Reynolds, A., Hadarean, L., Tinelli, C.: {{SMT} proof checking using a logical framework}. Formal Methods Syst. Des.  \textbf{42}(1),  91--118 (2013). \doi{10.1007/s10703-012-0163-3}, \url{https://doi.org/10.1007/s10703-012-0163-3}

\bibitem{DBLP:conf/sat/WetzlerHH14}
Wetzler, N., Heule, M., Jr., W.A.H.: {{DRAT}-trim: Efficient Checking and Trimming Using Expressive Clausal Proofs}. In: Sinz, C., Egly, U. (eds.) Theory and Applications of Satisfiability Testing - {SAT} 2014 - 17th International Conference, Held as Part of the Vienna Summer of Logic, {VSL} 2014, Vienna, Austria, July 14-17, 2014. Proceedings. Lecture Notes in Computer Science, vol.~8561, pp. 422--429. Springer (2014). \doi{10.1007/978-3-319-09284-3\_31}, \url{https://doi.org/10.1007/978-3-319-09284-3\_31}

\bibitem{10.1007/978-3-642-39634-2_18}
Wetzler, N., Heule, M.J.H., Hunt, W.A.: {Mechanical Verification of {SAT} Refutations with Extended Resolution}. In: Blazy, S., Paulin-Mohring, C., Pichardie, D. (eds.) Interactive Theorem Proving. pp. 229--244. Springer Berlin Heidelberg, Berlin, Heidelberg (2013)

\bibitem{DBLP:conf/date/ZhangM03}
Zhang, L., Malik, S.: {Validating {SAT} Solvers Using an Independent Resolution-Based Checker: Practical Implementations and Other Applications}. In: 2003 Design, Automation and Test in Europe Conference and Exposition {(DATE} 2003), 3-7 March 2003, Munich, Germany. pp. 10880--10885. {IEEE} Computer Society (2003). \doi{10.1109/DATE.2003.10014}, \url{http://doi.ieeecomputersociety.org/10.1109/DATE.2003.10014}

\end{thebibliography}

\begin{ExtendedVersion}
\newpage
\appendix
\section{Supplementary Proofs}
\subsection{Proof of Theorem~\ref{thm:monoexp} (Correctness of Monotonic Transformation)}~\label{appendix:monoexpcor}

To improve readability, we repeat the key definitions/theorems here:

{
\renewcommand{\thedefinition}{\ref{def:pdef}}
\begin{definition}[Propositional Definition]
Let $\atom{p}$ be the \textit{positive} predicate atom of predicate
$p$ over Boolean arguments $A$.
A \emph{propositional definition} of
$\atom{p}$, denoted as $\pdef{\atom{p}}$, is a CNF formula over variables
$V \supseteq (var(\atom{p}) \cup A)$ such that for every truth assignment
$\model$ to the variables in $A$, (1) $\reduce{\pdef{\atom{p}}}{\model}$
is satisfiable and (2) $\pdef{\atom{p}} \models (\model \Rightarrow
\atom{p})$ if and only if $p(\model)$ is $\top$. The
propositional definition of $\atom{\Bar{p}}$ is defined analogously.
\end{definition}
\addtocounter{definition}{-1}
} 

{
\renewcommand{\thedefinition}{\ref{def:monod}}
\begin{definition}[Monotonic Definition]
Let a monotonic predicate $p$ over input $A$ be given. A CNF formula
$\monopdef{\atom{p}}$ is a monotonic definition of the positive predicate
atom $\atom{p}$ if $\monopdef{\atom{p}}$ is a propositional definition
of $\atom{p}$, and it satisfies the following syntax restrictions:
(1) $\monopdef{\atom{p}}$ does not contain positive atoms from $A$,
(2) $\monopdef{\atom{p}}$ does not contain $\atom{\bar{p}}$, and (3)
$\atom{p}$ appears only in Horn clauses. The monotonic definition for
$\atom{\bar{p}}$ is defined analogously.
\end{definition}
\addtocounter{definition}{-1}
} 

{
\renewcommand{\thedefinition}{\ref{def:monot}}
\begin{definition}[Monotonic Transformation]
Let a monotonic predicate $p$ over input $A$ and a propositional
definition $\pdef{\atom{p}}$ for the positive predicate atom $\atom{p}$
be given.  $\textsc{MonoT}(\atom{p}, \pdef{\atom{p}})$ is the result of
the following transformations on $\pdef{\atom{p}}$:
(1) replace every occurrence of an input atom ($\atom{a}$ for $a \in
    A$) in $\pdef{\atom{p}}$ with a new atom $\atom{a'}$ (\:$\overline{\atom{a}}$
    is replaced with $\overline{\atom{a'}}$),
(2) replace every occurrence of $\atom{p}$ and $\atom{\overline{p}}$
    with $\atom{p'}$ and $\atom{\overline{p'}}$ respectively, and
(3) add the following Horn clauses: $\atom{a} \Rightarrow {\atom{a'}}$
    for every $a \in A$, and $\atom{p'} \Rightarrow \atom{p}$.
\end{definition}
\addtocounter{definition}{-1}
} 

{
\renewcommand{\thetheorem}{\ref{thm:monoexp}}
\begin{theorem}[Correctness of Monotonic Transformation]
   Given a monotonic predicate $p$ over input $A$ and the monotonic predicate atom $\atom{p}$, if we have any propositional definition $\pdef{\atom{p}}$ with $n$ clauses, then $\textsc{MonoT}(\atom{p}, \pdef{\atom{p}})$ results in a monotonic definition $\monopdef{\atom{p}}$ with at most $n + |A| + 1$ clauses.
\end{theorem}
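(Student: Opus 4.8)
The plan is to verify the two separate claims bundled in the statement: that $\monopdef{\atom p}$ meets the clause-count bound, and that it is genuinely a monotonic definition in the sense of Definition~\ref{def:monod} (which in turn requires it to be a propositional definition per Definition~\ref{def:pdef} and to obey the three syntactic restrictions). The counting and the syntactic checks are routine inspections of $\textsc{MonoT}$; the substance lies in the semantic equivalence, where the monotonicity of $p$ does the real work. A useful first observation is that the relabeling steps (1)--(2) turn the body of $\pdef{\atom p}$ into a verbatim copy $\pdef{\atom{p'}}$ of the original definition over the primed inputs $A'$ and the primed atom $\atom{p'}$; thus $\pdef{\atom{p'}}$ is itself a propositional definition of $\atom{p'}$ over $A'$, and I can freely invoke Definition~\ref{def:pdef} for it. The only clauses of $\monopdef{\atom p}$ mentioning the original atoms are then the $|A|$ linking clauses $\overline{\atom a}\vee\atom{a'}$ and the single clause $\overline{\atom{p'}}\vee\atom p$.

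For the easy claims, the size bound follows by counting: $n$ relabeled clauses, plus $|A|$ linking clauses, plus one clause for $\atom{p'}\Rightarrow\atom p$, giving at most $n+|A|+1$ (fewer when an input does not occur and its linking clause is vacuous). For the syntactic restrictions I would read them directly off the construction: every positive occurrence of an input was renamed into $A'$, so the only residual occurrences of an $a\in A$ are the negative literals $\overline{\atom a}$ in the linking clauses, giving restriction (1); step (2) removed every $\atom{\bar p}$, and the one new clause contributes only $\overline{\atom{p'}}$ and $\atom p$, giving restriction (2); and $\atom p$ occurs only in $\overline{\atom{p'}}\vee\atom p$, a Horn clause, giving restriction (3).

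The core of the proof is showing $\monopdef{\atom p}$ is a propositional definition of $\atom p$. First I would handle satisfiability of $\reduce{\monopdef{\atom p}}{\model}$ for each $\model$ over $A$: setting every $a'$ to true discharges the linking clauses, $\reduce{\pdef{\atom{p'}}}{A'}$ under the all-true input assignment is satisfiable by Definition~\ref{def:pdef}(1), and setting $\atom p$ true discharges $\overline{\atom{p'}}\vee\atom p$. For the entailment equivalence I would analyze when $\monopdef{\atom p}\wedge\model\wedge\overline{\atom p}$ is satisfiable. Under $\overline{\atom p}$ the clause $\overline{\atom{p'}}\vee\atom p$ forces $\atom{p'}$ false, while the linking clauses force $a'$ true exactly at the inputs where $\model(a)=\top$ and leave $a'$ free where $\model(a)=\bot$. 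Hence this conjunction is satisfiable iff there is a completion $\model'$ of $A'$ with $\model'\ge\model$ (pointwise, true above false) for which $\pdef{\atom{p'}}\wedge\model'\wedge\overline{\atom{p'}}$ is satisfiable, i.e.\ (by Definition~\ref{def:pdef}(2) applied to $\pdef{\atom{p'}}$) for which $p(\model')=\bot$.

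The main obstacle --- and the one place the hypotheses truly matter --- is closing this equivalence with monotonicity. If $p(\model)=\top$, then positive monotonicity (Definition~\ref{def:monof}) gives $p(\model')=\top$ for every $\model'\ge\model$, so no falsifying completion exists and $\monopdef{\atom p}\models(\model\Rightarrow\atom p)$; conversely, if $p(\model)=\bot$ then $\model'=\model$ is itself an admissible completion with $p(\model')=\bot$, so the entailment fails. This is exactly where the one-sidedness pays off: the linking clauses only push inputs upward ($\atom a\Rightarrow\atom{a'}$, never the reverse), and monotonicity ensures that this upward relaxation never exposes a falsifying completion whenever $p(\model)$ already holds, so the relaxed definition entails $\model\Rightarrow\atom p$ exactly when it should. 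I expect the only delicate bookkeeping to be the careful treatment of partial versus total assignments on $A'$ --- confirming that the free primed inputs range over precisely the order filter above $\model$ --- but no genuinely hard step should remain once the reduction above is in place.
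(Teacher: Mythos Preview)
Your proposal is correct and follows essentially the same approach as the paper: both verify the clause count and syntactic restrictions directly from the construction, then establish the propositional-definition conditions by showing satisfiability of the reduct and proving the entailment equivalence via monotonicity. Your presentation is slightly cleaner in two places --- you explicitly dispatch syntactic restriction~(1), which the paper glosses over, and you unify the two directions of the entailment equivalence into a single characterization (satisfiability of $\monopdef{\atom p}\wedge\model\wedge\overline{\atom p}$ iff some $\model'\ge\model$ has $p(\model')=\bot$), whereas the paper argues the forward and backward directions separately --- but the substance is the same.
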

\addtocounter{theorem}{-1}
} 

\noindent
The intuition behind the proof is that the requirements to be a
monotonic definition are specified in terms of only the input atoms $A$
and the predicate atom $\atom{p}$.  Because the predicate $p$ is monotonic,
and our propositional definitions need only be one-sided, it's possible
to ``hide'', via renaming variables,
any clauses that violate the requirements to be a monotonic definition.

\begin{proof}
It's obvious that $\textsc{MonoT}(\atom{p}, \pdef{\atom{p}})$ results in
$n+|A|+1$ clauses: the original $n$ clauses in $\pdef{\atom{p}}$ (with variables
renamed as specified), plus the $|A|+1$ new implications specified in the
construction in Def.~\ref{def:monot}.  The main goal in the proof is to
establish that $\monopdef{\atom{p}}$ is a monotonic definition.

Referring to Def.~\ref{def:monod}, conditions~(2) and (3) are trivially
satisfied:  $\monopdef{\atom{p}}$ does not contain $\atom{\overline{p}}$
because of the variable renaming, and $\atom{p}$ appears only in the
Horn clause $\atom{p'} \Rightarrow \atom{p}$.  So what remains is
to establish that $\monopdef{\atom{p}}$ is a propositional definition
for $\atom{p}$.

   First, for any assignment $\model$
   over $A$, $\reduce{\monopdef{\atom{p}}}{M}$ is satisfiable because
   $\reduce{\pdef{\atom{p}}}{M}$ is satisfiable. We can obtain a
   satisfying solution to $\reduce{\monopdef{\atom{p}}}{M}$ by extending
   a solution to $\reduce{\pdef{\atom{p}}}{M}$ by assigning  $\atom{a'}$
   and $\atom{p'}$ to the same truth values as $\atom{a}$ and $\atom{p}$,
   respectively.

   Second, for any assignment $\model$ over $A$, we prove that
   $\monopdef{\atom{p}} \models (\model \Rightarrow \atom{p})$ if
   and only if $p(\model) = \top$:
   \begin{itemize}
       \item For the forward direction, if 
	 $\monopdef{\atom{p}} \models (\model \Rightarrow \atom{p})$,
	   then we will prove that $p(\model) = \top$ by
	   contradiction.  Assume that there is a $\model$ such
	   that $\monopdef{\atom{p}} \models (\model \Rightarrow
	   \atom{p})$ and $p(\model) = \bot$.
	   Since  $\pdef{\atom{p}}$ is a propositional definition of $\atom{p}$,
	   by Def.~\ref{def:pdef}, we know from $p(\model)=\bot$ that
	   $\pdef{\atom{p}} \not\models (\model \Rightarrow \atom{p})$.
	   Therefore, there exists a satisfying
	   solution $M'$ to the formula $\pdef{\atom{p}} \wedge \model
	   \wedge \atom{p}$. We can extend $M'$ to obtain a satisfying
	   assignment $M'^+$ to the formula $\monopdef{\atom{p}}
	   \wedge \model \wedge \atom{p}$ by assigning $\atom{a'}$
	   and $\atom{p'}$ to the same truth values as $\atom{a}$
	   and $\atom{p}$. Therefore, $\monopdef{\atom{p}} \not\models
	   (\model \Rightarrow \atom{p})$, which is a contradiction.
   \item For the backward direction, if $p(\model) = \top$, then
   $\pdef{\atom{p}} \models (\model \Rightarrow \atom{p})$.
   Since $p$ is
   monotonic, we have $\pdef{\atom{p}} \models (\modelAssign \Rightarrow
   \atom{p})$ where $\modelAssign$ is the set of positive input atoms
   in $\model$. Therefore, $\monopdef{\atom{p}} \models (\modelAssign'
   \Rightarrow \atom{p'})$ where $\modelAssign' = \{\atom{a'} \mid
   \atom{a} \in \modelAssign\}$. Since $\monopdef{\atom{p}}$ contains the
   Horn clauses: $\atom{a} \Rightarrow {\atom{a'}}$ for every $a \in A$
   and $\atom{p'} \Rightarrow \atom{p}$,  we have $\monopdef{\atom{p}}
   \models (\modelAssign \Rightarrow \atom{p})$. Since $\model \supseteq
   \modelAssign$, we conclude $\monopdef{\atom{p}} \models (\model
   \Rightarrow \atom{p})$.
   \end{itemize}
\end{proof}

\subsection{Proof of Lemma~\ref{lemma:dual} (Duality)}~\label{appendix:dual}

{
\renewcommand{\thelemma}{\ref{lemma:dual}}
\begin{lemma}[Duality]
Let $p$ be a monotonic predicate over Boolean arguments $A$. Suppose
$\pdef{\atom{p}}$ and $\pdef{\overline{\atom{p}}}$ are positive and
negative propositional definitions, respectively. For every assignment
$\model$ to the variables in $A$:
\begin{enumerate}
    \item $\pdef{\atom{p}} \models (\model \Rightarrow \atom{p})$ if and only if $\pdef{\overline{\atom{p}}} \wedge  \model \wedge \atom{p}$ is satisfiable.
    \item $\pdef{\overline{\atom{p}}} \models (\model \Rightarrow \overline{\atom{p}})$ if and only if $\pdef{\atom{p}} \wedge \model \wedge \overline{\atom{p}}$ is satisfiable.
\end{enumerate}
\end{lemma}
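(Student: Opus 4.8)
The plan is to reduce the entire lemma to the defining property of a propositional definition (Def.~\ref{def:pdef}, condition~(2)), which characterizes entailment of $(\model \Rightarrow \atom{p})$ purely in terms of the semantic value $p(\model)$, together with one elementary logical identity relating failure of entailment to satisfiability. First I would record two facts. Applying Def.~\ref{def:pdef}(2) to the positive definition gives that $\pdef{\atom{p}} \models (\model \Rightarrow \atom{p})$ holds exactly when $p(\model) = \top$. Applying the analogous property to the negative definition gives that $\pdef{\overline{\atom{p}}} \models (\model \Rightarrow \overline{\atom{p}})$ holds exactly when the complement predicate is true, i.e.\ when $p(\model) = \bot$.

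Next I would invoke the standard identity that, for any formula $F$ and literal $\ell$, the entailment $F \models (\model \Rightarrow \ell)$ fails if and only if $F \wedge \model \wedge \overline{\ell}$ is satisfiable. For part~(1) I apply this with $F = \pdef{\overline{\atom{p}}}$ and $\ell = \overline{\atom{p}}$, using $\overline{\overline{\atom{p}}} = \atom{p}$ (the positive and negative predicate atoms are literals of the same variable $var(p)$). This yields that $\pdef{\overline{\atom{p}}} \wedge \model \wedge \atom{p}$ is satisfiable iff $\pdef{\overline{\atom{p}}} \not\models (\model \Rightarrow \overline{\atom{p}})$ iff $p(\model) \neq \bot$, i.e.\ $p(\model) = \top$. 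Chaining this with the positive characterization above gives $\pdef{\atom{p}} \models (\model \Rightarrow \atom{p})$ iff $p(\model) = \top$ iff $\pdef{\overline{\atom{p}}} \wedge \model \wedge \atom{p}$ is satisfiable, which is exactly part~(1).

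Part~(2) I would obtain by the same argument with the roles of $\atom{p}$ and $\overline{\atom{p}}$ interchanged: characterize $\pdef{\overline{\atom{p}}} \models (\model \Rightarrow \overline{\atom{p}})$ via $p(\model) = \bot$, and characterize satisfiability of $\pdef{\atom{p}} \wedge \model \wedge \overline{\atom{p}}$ via failure of $\pdef{\atom{p}} \models (\model \Rightarrow \atom{p})$, i.e.\ again via $p(\model) = \bot$; the two then coincide.

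The proof is short, so rather than a genuine obstacle the only points requiring care — which I would flag explicitly — are orienting the negative definition correctly (its entailment corresponds to $p(\model) = \bot$, not $\top$) and recognizing that $\atom{p}$ and $\overline{\atom{p}}$ share the variable $var(p)$, so that the complementation $\overline{\overline{\atom{p}}} = \atom{p}$ used above is legitimate. I would also note that $\model$ ranges over \emph{total} assignments to $A$, so that $p(\model)$ is well-defined; condition~(1) of Def.~\ref{def:pdef} plays no direct role in the derivation beyond guaranteeing that the definitions behave sensibly.
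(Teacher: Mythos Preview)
Your proof is correct and follows essentially the same route as the paper's own proof: both pivot through the semantic value $p(\model)$ using Def.~\ref{def:pdef}(2) on each side (positive and negative definitions), then translate non-entailment into satisfiability of the conjunction via the standard identity $F \not\models C \iff F \wedge \overline{C}$ is satisfiable. The paper presents this as a short chain of iffs; your write-up is the same chain with more explanatory prose.
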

\addtocounter{lemma}{-1}
} 

\begin{proof}
To prove $\pdef{\atom{p}} \models (M \Rightarrow \atom{p})$
if and only if $\pdef{\overline{\atom{p}}} \wedge M \wedge \atom{p}$ is
satisfiable:
\begin{center}
\begin{tabular}{lclll}
$\pdef{\atom{p}} \models (M \Rightarrow \atom{p})$ & iff & $p(M)=\top$ & \quad & By Def.~\ref{def:pdef} since $\pdef{\atom{p}}$ is a \\
 & & & & propositional definition. \\
 & iff & $p(M)\not=\bot$ & & \\
 & iff & $\pdef{\overline{\atom{p}}}\not\models (M\Rightarrow\overline{\atom{p}})$ & & By Def.~\ref{def:pdef}. \\[0.2ex]
 & iff & $\pdef{\overline{\atom{p}}} \wedge \overline{(M\Rightarrow\overline{\atom{p}})}$ is satisfiable & & By definition of $\models$. \\
 & iff & $\pdef{\overline{\atom{p}}} \wedge M\wedge\atom{p}$ is satisfiable & & 
\end{tabular}
\end{center}

The second statement in Lemma~\ref{lemma:dual}, $\pdef{\overline{\atom{p}}}
\models (M \Rightarrow \overline{\atom{p})}$ if and only if
$\pdef{\atom{p}} \wedge M \wedge \overline{\atom{p}}$, can be proved
analogously.
\end{proof}

Lemma~\ref{lemma:dual} illuminates a connection between any positive and
negative propositional definitions
$\pdef{\atom{p}}$ and $\pdef{\overline{\atom{p}}}$,
even those these definitions are one-sided.

\subsection{Proof of Lemma~\ref{lemma:ib} (Instantiation-Based Upper-Bound)}\label{sec:proofibub}

{
\renewcommand{\thelemma}{\ref{lemma:ib}}
\begin{lemma}[Instantiation-Based Upper-Bound]
     Let a predicate $p$ over
     input $A$ and a positive definition $\pdef{\atom{p}}$ be given.
     For any partial assignment $M'$ over
     $var(\pdef{\atom{p}}) \setminus (var(p) \cup A)$,
     \ \ $\reduce{\pdef{\atom{p}}}{M' \cup \atom{\overline{p}}} \Rightarrow \overline{\atom{p}}$
     is an over-approximation of $\pdef{\overline{\atom{p}}}$.
\end{lemma}
\addtocounter{lemma}{-1}
} 

\begin{proof}
    To prove that $\reduce{\pdef{\atom{p}}}{M' \cup \atom{\overline{p}}}
    \Rightarrow \overline{\atom{p}}$ is an over-approximation of any
    propositional definition $\pdef{\overline{\atom{p}}}$, we will show
    that for any theory lemma $M_A \Rightarrow \overline{\atom{p}}$
    that is logically implied by $\reduce{\pdef{\atom{p}}}{M' \cup
    \atom{\overline{p}}}\Rightarrow \overline{\atom{p}}$, the lemma is
    also logically implied by $\pdef{\overline{\atom{p}}}$.

    If the lemma $M_A \Rightarrow \atom{\overline{p}}$ is logically implied
    by $\reduce{\pdef{\atom{p}}}{M' \cup \atom{\overline{p}}}
    \Rightarrow \atom{\overline{p}}$,
    then
    \begin{equation} \label{lemma2:eq1}
	    \left ( \reduce{\pdef{\atom{p}}}{M' \cup \atom{\overline{p}}}
	    \Rightarrow \atom{\overline{p}} \right ) \wedge
	    \neg \left ( M_A \Rightarrow \atom{\overline{p}} \right )
    \end{equation}
    must be false (i.e., UNSAT), by the definition of implication.
    Note, however, that $\neg \left ( M_A \Rightarrow \atom{\overline{p}} \right )$
    is just the conjunction of literals (unit-clauses) $M_A \wedge \atom{p}$,
    which we can interpret as an assignment.

    Now, consider applying that assignment $M_A \wedge \atom{p}$ to
    $\reduce{\pdef{\atom{p}}}{M' \cup \atom{\overline{p}}}$, namely:
    \begin{equation} \label{lemma2:eq2}
	    \reduce{\left ( \reduce{\pdef{\atom{p}}}{M' \cup \atom{\overline{p}}} \right ) }{M_A \cup \atom{p}}.
    \end{equation}
    We can see that statement~\ref{lemma2:eq2} must be valid, because if
    it had any falsifying assignment,
    that assignment would make
    $\left ( \reduce{\pdef{\atom{p}}}{M' \cup \atom{\overline{p}}}
            \Rightarrow \atom{\overline{p}} \right )$ true,
    so
    we could extend that assignment with
    $M_A \cup \atom{p}$ and satisfy statement~\ref{lemma2:eq1},
    which we know must be UNSAT.
    Therefore, since we have established that
    $\reduce{\left ( \reduce{\pdef{\atom{p}}}{M' \cup \atom{\overline{p}}} \right ) }{M_A \cup \atom{p}}$ is valid, we know that
    \begin{equation} \label{lemma2:eq3}
	    \reduce{\pdef{\atom{p}}}{M' \cup \atom{\overline{p}}} \wedge  M_A \wedge \atom{p}
    \end{equation}
    must be satisfiable (using the assignment $M_A \cup \atom{p}$ and
    any assignment to the remaining variables).

From the satisfiability of statement~\ref{lemma2:eq3}, we derive
that $\reduce{\pdef{\atom{p}}}{M' \cup \atom{\overline{p}}} \wedge  M_A$
is satisfiable (since omitting a conjunct can only increase satisfiability).
Furthermore, since the variables in $M'$ are disjoint
    from the variables in $M_A$ and $var(\atom{p})$,
    there can be no conflicts between these partial assignments, so
    the satisfiability of
    $\reduce{\pdef{\atom{p}}}{M' \cup \atom{\overline{p}}} \wedge  M_A$
    implies the satisfiability of
    $\pdef{\atom{p}} \wedge M' \wedge \atom{\overline{p}} \wedge M_A$.
    Weakening again by omitting conjuncts, we obtain that
	$\pdef{\atom{p}} \wedge \atom{\bar{p}} \wedge M_A$ is satisfiable.
 By the second condition of Duality (Lemma~\ref{lemma:dual}),
 $M_A \Rightarrow \overline{\atom{p}}$ is  logically implied by
 $\pdef{\bar{\atom{p}}}$.
\end{proof}

Lemma~\ref{lemma:ib} establishes that arbitrary partial assignments
to a positive propositional definition can be used to create a safe
upper-bound of a negative propositional definition, and vice-versa.
But the lemma does not establish that these upper-bounds are
particularly tight.  In Theorem~\ref{thm:lemmaHorn}, we show that
it is always possible to carefully choose an assignment to create
an upper-bound of a propositional definition sufficient to prove any
specific theory lemma.

\subsection{Proof of Lemma~\ref{lemma:ibh} (Instantiation-Based Horn Upper-Bound)}\label{sec:proofibhub}

To improve readability, we repeat a key definition and lemma here:

{
\renewcommand{\thedefinition}{\ref{def:monod}}
\begin{definition}[Monotonic Definition]
Let a monotonic predicate $p$ over input $A$ be given. A CNF formula
$\monopdef{\atom{p}}$ is a monotonic definition of the positive predicate
atom $\atom{p}$ if $\monopdef{\atom{p}}$ is a propositional definition
of $\atom{p}$, and it satisfies the following syntax restrictions:
(1) $\monopdef{\atom{p}}$ does not contain positive atoms from $A$,
(2) $\monopdef{\atom{p}}$ does not contain $\atom{\bar{p}}$, and (3)
$\atom{p}$ appears only in Horn clauses. The monotonic definition for
$\atom{\bar{p}}$ is defined analogously.
\end{definition}
\addtocounter{definition}{-1}
} 

{
\renewcommand{\thelemma}{\ref{lemma:ib}}
\begin{lemma}[Instantiation-Based Upper-Bound]
     Let a predicate $p$ over
     input $A$ and a positive definition $\pdef{\atom{p}}$ be given.
     For any partial assignment $M'$ over
     $var(\pdef{\atom{p}}) \setminus (var(p) \cup A)$,
     \ \ $\reduce{\pdef{\atom{p}}}{M' \cup \atom{\overline{p}}} \Rightarrow \overline{\atom{p}}$
     is an over-approximation of $\pdef{\overline{\atom{p}}}$.
\end{lemma}
\addtocounter{lemma}{-1}
} 

It's also important to reiterate the text of a footnote originally
attached to Lemma~\ref{lemma:ib}:
\begin{quotation}
	\noindent
	Note that $\reduce{\pdef{\atom{p}}}{M'}$ is encoded
	in CNF, so to compactly (i.e., linear-size) encode
	$\reduce{\pdef{\atom{p}}}{M'} \Rightarrow \overline{\atom{p}}$
	in CNF, we introduce a new literal $l_i$ for each clause
	$C_i\in\reduce{\pdef{\atom{p}}}{M'}$, create clauses
	$\overline{c_{ij}}\vee \overline{l_i}$ for each literal $c_{ij}\in C_i$,
	and add clause $l_1 \vee
	l_2 \vee\ldots\vee l_n \vee \overline{\atom{p}}$.
\end{quotation}
which describes how we encode
$\reduce{\pdef{\atom{p}}}{M'} \Rightarrow \overline{\atom{p}}$ into CNF.
We use Lemma~\ref{lemma:ib} with this encoding in Lemma~\ref{lemma:ibh},
where we need to establish that these new clauses are all (dual) Horn.
(The original clauses in $\reduce{\pdef{\atom{p}}}{M'}$ are not included
in the newly encoded CNF.)

{
\renewcommand{\thelemma}{\ref{lemma:ibh}}
\begin{lemma}[Instantiation-Based Horn Upper-Bound]
Given a monotonic predicate $p$ over input $A$ and a positive
monotonic definition $\monopdef{\atom{p}}$, let $X$ represent the
set of auxiliary variables: $var(\monopdef{\atom{p}}) \setminus
(A \cup var(p))$.
For any complete satisfying assignment $M_{X\cup A}$ to
$\reduce{\monopdef{\atom{p}}}{\overline{\atom{p}}}$, the formula
$(\reduce{\monopdef{\atom{p}}}{\overline{\atom{p}} \cup \modelX}) \Rightarrow
\overline{\atom{p}}$ serves as a Horn upper-bound for any propositional
definition of $\overline{\atom{p}}$,
where $\modelX$ is a partial assignment
derived from $M_{X\cup A}$ for the auxiliary variables $X$.
\end{lemma}
\addtocounter{lemma}{-1}
} 

\begin{proof}
By lemma~\ref{lemma:ib}, $(\reduce{\monopdef{\atom{p}}}{\overline{\atom{p}} \cup \modelX}) \Rightarrow \overline{\atom{p}}$ is an upper-bound of $\monopdef{\overline{\atom{p}}}$.
All that remains to be established is that this is a \textit{Horn} upper-bound.

Since $M_X$ is a complete assignment to the auxiliary variables,
the clauses in
$(\reduce{\monopdef{\atom{p}}}{\overline{\atom{p}} \cup \modelX})$
can contain only the variables in $A$.
Furthermore, by Def.~\ref{def:monod}, all literals for variables in $A$
are negative.

The CNF that we build for
$(\reduce{\monopdef{\atom{p}}}{\overline{\atom{p}} \cup \modelX}) \Rightarrow \overline{\atom{p}}$,
consists entirely of two types of clauses:
	(1)~clauses of the form $(\overline{c_{ij}}\vee \overline{l_i})$
		for each literal $c_{ij}$ in each original clause
		$C_i \in (\reduce{\monopdef{\atom{p}}}{\overline{\atom{p}} \cup \modelX})$,
	and (2)~a single clause $(l_1 \vee l_2 \vee\ldots\vee l_n \vee \overline{\atom{p}})$.
The preceding paragraph established that
all the literals $c_{ij}$ are
negative literals, so the literals $\overline{c_{ij}}$ are positive.
Therefore, all clauses of type~(1) are dual Horn.
(They are also Horn, but that's irrelevant for this proof.)
The clause of type~(2) is obviously dual Horn.
Therefore, the CNF we construct for the
formula $(\reduce{\monopdef{\atom{p}}}{\overline{\atom{p}} \cup \modelX}) \Rightarrow \overline{\atom{p}}$ is dual Horn. 
\end{proof}

\subsection{Proof of Theorem~\ref{thm:lemmaHorn} (Lemma-Specific Horn Upper-Bound)}\label{sec:correctnesslemmaHorn}

To improve readability, we repeat key definitions and lemmas here:

{
\renewcommand{\thedefinition}{\ref{def:pdef}}
\begin{definition}[Propositional Definition]
Let $\atom{p}$ be the \textit{positive} predicate atom of predicate
$p$ over Boolean arguments $A$.
A \emph{propositional definition} of
$\atom{p}$, denoted as $\pdef{\atom{p}}$, is a CNF formula over variables
$V \supseteq (var(\atom{p}) \cup A)$ such that for every truth assignment
$\model$ to the variables in $A$, (1) $\reduce{\pdef{\atom{p}}}{\model}$
is satisfiable and (2) $\pdef{\atom{p}} \models (\model \Rightarrow
\atom{p})$ if and only if $p(\model)$ is $\top$. The
propositional definition of $\atom{\Bar{p}}$ is defined analogously.
\end{definition}
\addtocounter{definition}{-1}
} 

{
\renewcommand{\thedefinition}{\ref{def:monod}}
\begin{definition}[Monotonic Definition]
Let a monotonic predicate $p$ over input $A$ be given. A CNF formula
$\monopdef{\atom{p}}$ is a monotonic definition of the positive predicate
atom $\atom{p}$ if $\monopdef{\atom{p}}$ is a propositional definition
of $\atom{p}$, and it satisfies the following syntax restrictions:
(1) $\monopdef{\atom{p}}$ does not contain positive atoms from $A$,
(2) $\monopdef{\atom{p}}$ does not contain $\atom{\bar{p}}$, and (3)
$\atom{p}$ appears only in Horn clauses. The monotonic definition for
$\atom{\bar{p}}$ is defined analogously.
\end{definition}
\addtocounter{definition}{-1}
} 

{
\renewcommand{\thelemma}{\ref{lemma:ibh}}
\begin{lemma}[Instantiation-Based Horn Upper-Bound]
Given a monotonic predicate $p$ over input $A$ and a positive
monotonic definition $\monopdef{\atom{p}}$, let $X$ represent the
set of auxiliary variables: $var(\monopdef{\atom{p}}) \setminus
(A \cup var(p))$.
For any complete satisfying assignment $M_{X\cup A}$ to
$\reduce{\monopdef{\atom{p}}}{\overline{\atom{p}}}$, the formula
$(\reduce{\monopdef{\atom{p}}}{\overline{\atom{p}} \cup \modelX}) \Rightarrow
\overline{\atom{p}}$ serves as a Horn upper-bound for any propositional
definition of $\overline{\atom{p}}$,
where $\modelX$ is a partial assignment
derived from $M_{X\cup A}$ for the auxiliary variables $X$.
\end{lemma}
\addtocounter{lemma}{-1}
} 

{
\renewcommand{\thetheorem}{\ref{thm:lemmaHorn}}
\begin{theorem}[Lemma-Specific Horn Upper-Bound]
Let a monotonic predicate $p$ over input $A$, a monotonic definition
$\monopdef{\atom{p}}$  and a lemma in the form $\modelAssign \Rightarrow
\overline{\atom{p}}$ be given. We denote $X$ as the set of auxiliary
variables: $var(\monopdef{\atom{p}}) \setminus (A \cup var(p))$.
The lemma $\modelAssign \Rightarrow \overline{\atom{p}}$ is in the theory of
$\overline{\atom{p}}$ if and only if there
exists an 
assignment $\modelX$ on $X$ such that:
    (1)~$\reduce{\monopdef{\atom{p}}}{\bar{\atom{p}} \cup
    \modelX \cup \modelAssign}$ is satisfiable and
    (2)~$(\reduce{\monopdef{\atom{p}}}{\bar{\atom{p}} \cup \modelX}
    \Rightarrow \bar{\atom{p}}) \unit (\modelAssign \Rightarrow
    \bar{\atom{p}})$.

\end{theorem}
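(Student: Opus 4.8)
The plan is to prove the biconditional by handling the two directions with different tools: the ``if'' direction is a soundness argument that rides on Lemma~\ref{lemma:ibh}, while the ``only if'' direction is a constructive completeness argument that produces an explicit witness $\modelX$ and then exhibits a concrete unit-propagation derivation.

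For the ``if'' direction, suppose some $\modelX$ on $X$ satisfies conditions (1) and (2). First I would note that condition~(1) supplies a satisfying assignment of $\reduce{\monopdef{\atom{p}}}{\overline{\atom{p}} \cup \modelX \cup \modelAssign}$, which extends (over the remaining inputs in $A$) to a complete satisfying assignment of $\reduce{\monopdef{\atom{p}}}{\overline{\atom{p}}}$ whose $X$-part is $\modelX$. This is exactly the hypothesis of Lemma~\ref{lemma:ibh}, so $(\reduce{\monopdef{\atom{p}}}{\overline{\atom{p}} \cup \modelX}) \Rightarrow \overline{\atom{p}}$ is a Horn upper-bound of any negative definition $\pdef{\overline{\atom{p}}}$; that is, $\pdef{\overline{\atom{p}}} \models (\reduce{\monopdef{\atom{p}}}{\overline{\atom{p}} \cup \modelX} \Rightarrow \overline{\atom{p}})$. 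Condition~(2), together with the fact that $\unit$ implies $\models$, gives $(\reduce{\monopdef{\atom{p}}}{\overline{\atom{p}} \cup \modelX} \Rightarrow \overline{\atom{p}}) \models (\modelAssign \Rightarrow \overline{\atom{p}})$. Chaining these two entailments yields $\pdef{\overline{\atom{p}}} \models (\modelAssign \Rightarrow \overline{\atom{p}})$, which by Def.~\ref{def:pdef} (characterizing exactly the theory lemmas of a propositional definition) means precisely that $\modelAssign \Rightarrow \overline{\atom{p}}$ lies in the theory of $\overline{\atom{p}}$.

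For the ``only if'' direction, assume the lemma is valid, and recall that its antecedent $\modelAssign$ is a set of negative input atoms (as for any lemma of a positively monotonic predicate's negation). Let $\modelOver$ extend $\modelAssign$ by setting every unassigned input to $\top$; validity forces $p(\modelOver)=\bot$. By Def.~\ref{def:pdef} applied to $\monopdef{\atom{p}}$, this gives $\monopdef{\atom{p}} \not\models (\modelOver \Rightarrow \atom{p})$, equivalently (the first clause of Lemma~\ref{lemma:dual}) that $\monopdef{\atom{p}} \wedge \modelOver \wedge \overline{\atom{p}}$ is satisfiable. I would fix a satisfying assignment $M^\ast = \modelOver \cup \modelX$ and take $\modelX := M^\ast|_X$. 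Condition~(1) is then immediate, since $\modelOver \supseteq \modelAssign$ means $M^\ast$ already satisfies $\reduce{\monopdef{\atom{p}}}{\overline{\atom{p}} \cup \modelX \cup \modelAssign}$.

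The crux, and the step I expect to be the main obstacle, is establishing condition~(2): that the encoded implication $\reduce{\monopdef{\atom{p}}}{\overline{\atom{p}} \cup \modelX} \Rightarrow \overline{\atom{p}}$ RUP-derives $\modelAssign \Rightarrow \overline{\atom{p}}$. The key sublemma I need is that every clause surviving in $F := \reduce{\monopdef{\atom{p}}}{\overline{\atom{p}} \cup \modelX}$ contains a literal $\overline{a}$ with $\overline{a}\in\modelAssign$. To see this, use Def.~\ref{def:monod}: after reducing by the complete $X$-assignment $\modelX$ and by $\overline{\atom{p}}$, each surviving clause consists solely of negative input literals; since $M^\ast$ satisfies the original clause while $\modelX \cup \overline{\atom{p}}$ falsifies all of its non-input literals, $\modelOver$ must satisfy one of its input literals $\overline{a}$, so $a=\bot$ under $\modelOver$, and since $\modelOver$ disables exactly the atoms of $\modelAssign$, we get $\overline{a}\in\modelAssign$. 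Given this, I would run unit propagation on the CNF encoding from the footnote of Lemma~\ref{lemma:ib} together with the asserted units $\modelAssign \wedge \atom{p}$: for each surviving clause $C_i$ the guaranteed literal $\overline{a}\in\modelAssign$ forces $a=\bot$, so the clause $a \vee \overline{l_i}$ propagates $l_i=\bot$; once all $l_i$ are false, the final clause $l_1 \vee \cdots \vee l_n \vee \overline{\atom{p}}$ propagates $\overline{\atom{p}}$, contradicting the asserted $\atom{p}$. This conflict is precisely the unit-propagation derivation required by $\unit$, establishing condition~(2) and completing the proof.
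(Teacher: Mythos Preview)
Your proof is correct. The backward (``if'') direction is essentially identical to the paper's: both invoke Lemma~\ref{lemma:ibh} to get that the instantiated implication is a Horn upper-bound of any $\pdef{\overline{\atom{p}}}$, then chain entailments through condition~(2).

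The forward (``only if'') direction differs genuinely from the paper. Both constructions of the witness $\modelX$ agree: extend $\modelAssign$ to $\modelOver$ by setting unassigned inputs to $\top$, observe $p(\modelOver)=\bot$, and take $\modelX$ from a satisfying assignment of $\monopdef{\atom{p}}\wedge\modelOver\wedge\overline{\atom{p}}$. The divergence is in establishing condition~(2). The paper first argues, using monotonicity of the definition, that the chosen $\modelX$ in fact satisfies $\reduce{\monopdef{\atom{p}}}{\overline{\atom{p}}\cup M}$ for \emph{every} complete extension $M$ of $\modelAssign$; it then proves $(\reduce{\monopdef{\atom{p}}}{\overline{\atom{p}}\cup\modelX}\Rightarrow\overline{\atom{p}})\models(\modelAssign\Rightarrow\overline{\atom{p}})$ by contradiction (a falsifying extension $M'$ would violate that universal satisfiability), and finally invokes the dual-Horn structure from Lemma~\ref{lemma:ibh} to upgrade $\models$ to $\unit$. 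You instead bypass both the ``for all $M$'' step and the semantic contradiction: your structural sublemma (every clause surviving in $\reduce{\monopdef{\atom{p}}}{\overline{\atom{p}}\cup\modelX}$ contains a literal $\overline{a}\in\modelAssign$, because $\modelOver$ is what satisfies it and $\modelOver$ falsifies only the atoms named in $\modelAssign$) lets you trace the concrete unit propagation through the footnote encoding directly. Your route is more elementary and self-contained; the paper's route is more modular, cleanly separating the semantic entailment from the syntactic RUP step via the general Horn-implies-RUP principle.
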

\addtocounter{theorem}{-1}
} 

\noindent
\begin{proof}
For the forward direction, if $\modelAssign \Rightarrow \overline{\atom{p}}$ is
in the theory of $\overline{\atom{p}}$, then for every complete assignment $M$ over variables $A$
that are extended from $\modelAssign$, it must be the case that $p(M(A))
= \bot$.
Therefore,
since $\monopdef{\atom{p}}$ is a propositional definition of $\atom{p}$
(Def.~\ref{def:pdef}),
we know that
$\monopdef{\atom{p}} \not\models (M\Rightarrow\atom{p})$.
This means that
$\monopdef{\atom{p}} \wedge \neg(M\Rightarrow\atom{p})$ is
satisfiable; equivalently, that
$\monopdef{\atom{p}} \wedge M \wedge \overline{\atom{p}}$ is satisfiable.
Therefore, $\reduce{\monopdef{\atom{p}}}{\overline{\atom{p}} \cup M}$
must also be satisfiable.

Now, let us consider a specific complete assignment to the variables
in $A$, based on the given $\modelAssign$:
let $\modelOver$ be the
complete assignment on $A$ by assigning $\top$ to
every unassigned variable in $A$,
and let $\modelX$ be a satisfying assignment (on the auxiliary variables $X$)
to the formula
$\reduce{\monopdef{\atom{p}}}{\overline{\atom{p}} \cup \modelOver}$,
which we know is satisfiable from the preceding paragraph.

Since
$\monopdef{\atom{p}}$ is a monotonic definition
(Def.~\ref{def:monod}),
it can contain
only negative atoms from $A$, so the solution $\modelX$ to
$\reduce{\monopdef{\atom{p}}}{\overline{\atom{p}} \cup \modelOver}$ is
also a solution to $\reduce{\monopdef{\atom{p}}}{\overline{\atom{p}}
\cup M}$ for every complete assignment $M$ over variables $A$
that is extended from $\modelAssign$.
Thus, we have established the claim~(1) that
    $\reduce{\monopdef{\atom{p}}}{\overline{\atom{p}} \cup
    \modelX \cup \modelAssign}$ is satisfiable.

Moving on to claim~(2) that
    $(\reduce{\monopdef{\atom{p}}}{\overline{\atom{p}} \cup \modelX}
    \Rightarrow \overline{\atom{p}}) \unit (\modelAssign \Rightarrow
    \overline{\atom{p}})$,
the key fact that we have just established is that $\modelX$ is a
solution to
$\reduce{\monopdef{\atom{p}}}{\overline{\atom{p}} \cup M}$
for every complete assignment $M$ over variables $A$
that is extended from $\modelAssign$; in other words, we know that
$\reduce{\monopdef{\atom{p}}}{\overline{\atom{p}} \cup M \cup \modelX}=\top$
for \textit{all} complete assignments $M$ over variables $A$ that are
extended from $\modelAssign$.  We will return to this fact shortly.

Now, we will prove that
    $(\reduce{\monopdef{\atom{p}}}{\overline{\atom{p}} \cup \modelX}
    \Rightarrow \overline{\atom{p}}) \models (\modelAssign \Rightarrow
    \overline{\atom{p}})$
by contradiction.  Suppose
\[
    (\reduce{\monopdef{\atom{p}}}{\overline{\atom{p}} \cup \modelX}
    \Rightarrow \overline{\atom{p}}) \not\models (\modelAssign \Rightarrow
    \overline{\atom{p}}).
\]
Manipulating the expression:
\begin{eqnarray*}
& & 
    (\reduce{\monopdef{\atom{p}}}{\overline{\atom{p}} \cup \modelX}
    \Rightarrow \overline{\atom{p}}) \not\models (\modelAssign \Rightarrow
    \overline{\atom{p}})
\\
& \quad\mbox{iff}\quad &
    (\reduce{\monopdef{\atom{p}}}{\overline{\atom{p}} \cup \modelX}
    \Rightarrow \overline{\atom{p}}) \wedge \neg (\modelAssign \Rightarrow
    \overline{\atom{p}})
    \mbox{\ is satisfiable.}
\\
& \quad\mbox{iff}\quad &
    (\reduce{\monopdef{\atom{p}}}{\overline{\atom{p}} \cup \modelX}
    \Rightarrow \overline{\atom{p}}) \wedge \modelAssign \wedge \atom{p}
    \mbox{\ is satisfiable.}
\\
& \quad\mbox{iff}\quad &
    \left ( \neg (\reduce{\monopdef{\atom{p}}}{\overline{\atom{p}} \cup \modelX})
    \vee \overline{\atom{p}}
    \right ) \wedge \modelAssign \wedge \atom{p}
    \mbox{\ is satisfiable.}
\\
& \quad\mbox{iff}\quad &
    \neg (\reduce{\monopdef{\atom{p}}}{\overline{\atom{p}} \cup \modelX})
    \wedge \modelAssign \wedge \atom{p}
    \mbox{\ is satisfiable.}
\end{eqnarray*}
This means there must exist a complete assignment $M'$ over variables $A$
that is extended from $\modelAssign$, such that
$\reduce{\neg (\reduce{\monopdef{\atom{p}}}{\overline{\atom{p}} \cup \modelX})}{M'} = \top$,
or equivalently, that
$(\reduce{\monopdef{\atom{p}}}{\overline{\atom{p}} \cup \modelX \cup M'}) = \bot$.
But this contradicts the fact that
$\reduce{\monopdef{\atom{p}}}{\overline{\atom{p}} \cup M \cup \modelX}=\top$
for all complete assignments $M$ over variables $A$ that are
extended from $\modelAssign$.
Therefore,
    $(\reduce{\monopdef{\atom{p}}}{\overline{\atom{p}} \cup \modelX}
    \Rightarrow \overline{\atom{p}}) \models (\modelAssign \Rightarrow
    \overline{\atom{p}})$.

By Lemma~\ref{lemma:ibh},
we construct $(\reduce{\monopdef{\atom{p}}}{\overline{\atom{p}}
\cup \modelX} \Rightarrow \overline{\atom{p}})$ as (dual) Horn clauses, and thus
$(\reduce{\monopdef{\atom{p}}}{\overline{\atom{p}} \cup \modelX} \Rightarrow
\overline{\atom{p}}) \unit (\modelAssign \Rightarrow \overline{\atom{p}})$.
This completes the proof in the forward direction.


For the backward direction, suppose there
    exists some witness $\modelX$ such that
    $(\reduce{\monopdef{\atom{p}}}{\overline{\atom{p}} \cup
    \modelX} \Rightarrow \overline{\atom{p}}) \unit (\modelAssign
    \Rightarrow \overline{\atom{p}})$.
By Lemma~\ref{lemma:ibh},
$(\reduce{\monopdef{\atom{p}}}{\overline{\atom{p}} \cup
    \modelX} \Rightarrow \overline{\atom{p}})$
    is a Horn upper-bound of any propositional definition
    $\pdef{\overline{\atom{p}}}$ of $\overline{\atom{p}}$.
    Therefore,
\begin{eqnarray*}
\pdef{\overline{\atom{p}}} & \models &
	(\reduce{\monopdef{\atom{p}}}{\overline{\atom{p}} \cup
	    \modelX} \Rightarrow \overline{\atom{p}}) \\
	& \unit & \modelAssign \Rightarrow \overline{\atom{p}},
\end{eqnarray*}
so $\modelAssign \Rightarrow \overline{\atom{p}}$
is a valid theory lemma.
\end{proof}




\newpage
\section{Workflow for Verifying Theory Lemmas}\label{appendix:visualworkflow}
\vspace*{-0.2in}
\begin{figure}[h!]
    \centering
    \includegraphics[width=\textwidth]{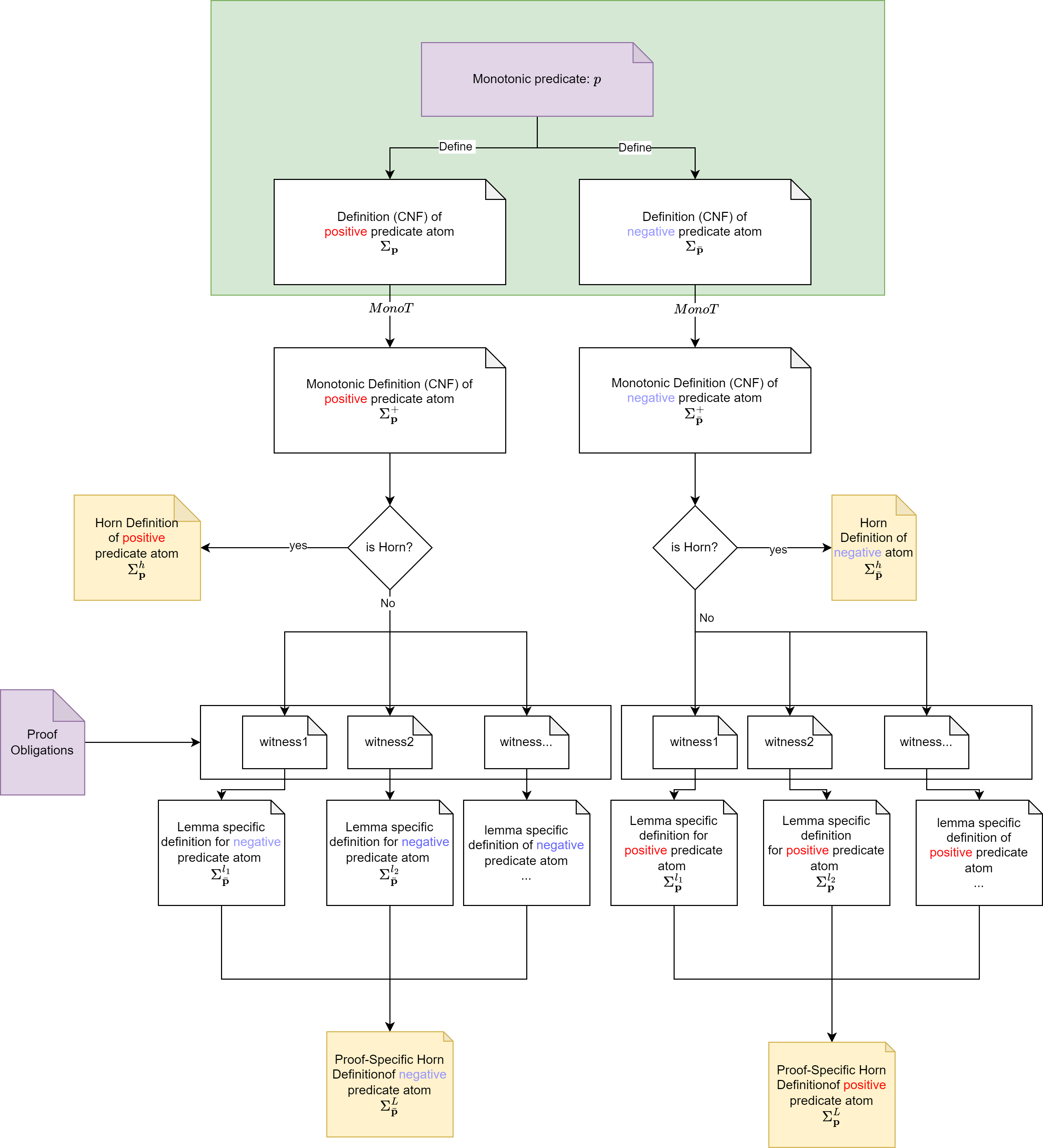}
    \caption{The workflow for constructing proof-specific Horn definitions. The inputs (proof obligations and a monotonic predicate) are colored in purple, and the trusted bases are highlighted in green. The outputs of the workflow (Horn Definitions and proof-specific Horn definitions) are colored in yellow. Given a monotonic predicate, we first define the semantics of the predicate's atoms in CNFs. (2) These CNF definitions are automatically transformed into monotonic definitions offline. If the monotonic definitions are already in Horn theory, they can be directly used as Horn definitions for RUP checks. Otherwise, (3) during proof checking, they are approximated into proof-specific Horn definitions using witnesses of theory lemmas in the input proof obligation. (4) Both Horn Definitions and proof-specific definitions can verify all sound theory lemmas in a proof via RUP.}
    \label{fig:hornapprox}
\end{figure}

\newpage
\section{Supplementary Definitions for Theory Predicates}\label{sec:morencodings}

\subsection{Bit-vector Comparison}\label{appendix:bvcomp}

Given a bit-vector (BV) $\vec{a}$, the width of $\vec{a}$ is $|\vec{a}|$
and the boolean variable at index $i$ of $\vec{a}$ is $\vec{a}[i]$.  The
value of $\vec{a}$, denoted $val(\vec{a})$, is $\sum_{i=0}^{|\vec{a}|-1}
(\vec{a}[i] \times 2^{i})$.  Given two BVs $\vec{a}$ and $\vec{b}$, the
theory of BV comparison contains the predicate $\vec{a} > \vec{b}$
whose inputs are the boolean variables from $\vec{a}$ and $\vec{b}$.
The predicate holds iff $val(\vec{a}) > val(\vec{b})$.  The predicate
is positively monotonic for the variables of $\vec{a}$ and negatively
monotonic for the variables of $\vec{b}$ because assigning any variable
of $\Vec{a}$ to $\top$ would increases $val(\Vec{b})$ and assigning any
variable of $\Vec{b}$ would decrease $val(\Vec{b})$.

 Given two bit-vectors $\Vec{a}$ and $\Vec{b}$ of width $k$, 
we denote $\gtatom{\Vec{a}}{\Vec{b}}$ as the positive predicate atom for the comparison predicate $\Vec{a} > \Vec{b}$, and its monotonic definition contains clauses:
\begin{enumerate}
    \item $\overline{ge_{i+1}} \vee \overline{a[i]} \vee b[i] \vee gt_i$  for every $i$ in range from $0$ to $k-1$.
   \item $\overline{ge_{i+1}} \vee \overline{a[i]} \vee ge_i$  for every $i$ in range from $0$ to $k-1$.
   \item $\overline{gt_i} \vee \gtatom{\Vec{a}}{\Vec{b}}$  for every $i$ in range from $0$ to $k-1$.
   \item unit clauses $ge_k$ and $\bar{gt_k}$
\end{enumerate}

The monotonic definition compares $\Vec{a}$ and $\Vec{b}$ from the most significant bit ($\Vec{a}[k-1]$ and $\Vec{b}[k-1]$) to the least significant bit ($\Vec{a}[0]$ and $\Vec{b}[0]$). For each index $i$ under comparison, we introduce two atoms $ge_i$ and $gt_i$, representing the necessary and sufficient conditions for $\gtatom{\Vec{a}}{\Vec{b}}$, respectively. The predicate atom $\gtatom{\Vec{a}}{\Vec{b}}$ is true if at least one of $gt_i$ is true. The definition is monotonic since it contains only negative atoms from $\Vec{a}$, positive atoms from $\Vec{b}$, and the positive predicate atom $\gtatom{\Vec{a}}{\Vec{b}}$. Moreover, the definition is already a Horn definition and can be used directly for proving theory lemmas without the need for transformation into a proof-specific Horn definition.


The monotonic definition for the negative predicate atom $\overline{\gtatom{\Vec{a}}{\Vec{b}}}$ contains the following clauses:

\begin{enumerate}
    \item $le_{i+1} \wedge \overline{a[i]} \wedge b[i] \Rightarrow lt_i$  for every $i$ in range from $0$ to $k-1$.
   \item $le_{i+1} \wedge \overline{a[i]} \Rightarrow le_i$  for every $i$ in range from $0$ to $k-1$.
   \item $lt_i \Rightarrow \overline{\gtatom{\Vec{a}}{\Vec{b}}}$  for every $i$ in range from $0$ to $k-1$ and $le_0 \Rightarrow \overline{\gtatom{\Vec{a}}{\Vec{b}}}$.
   \item unit clauses $le_k$ and $\bar{lt_k}$
\end{enumerate}

\begin{remark}The total size complexity of the encoding for both $\gtatom{\Vec{a}}{\Vec{b}}$ and $\overline{\gtatom{\Vec{a}}{\Vec{b}}}$ is $O(k)$. It's worth noting that the monotonic definition can be extended to support other bit-vector comparison predicates with different functions, such as $\ge_k$, $<_k$, and $\le_k$, because they can be converted into a predicate with $>$.
\end{remark}

\subsection{Bit-vector Summation}\label{appendix:bvsum}
Given two sets of bit-vectors $\Vec{A}$ and $\Vec{B}$, the theory of comparison   between sums contains the predicate  $ \sum{\Vec{A}} > \sum{\Vec{B}} $ whose input are boolean variables from all bit-vectors in $\vec{A}$ and $\vec{B}$. The predicate holds if and only if $\sum_{\vec{a} \in \vec{A}}{val(\vec{a}
)} > \sum_{\vec{b} \in \vec{B}}{val(\vec{b}
)}$, which is positively monotonic to boolean variables in $\vec{A}$ and negatively monotonic to the variables in $\vec{B}$.

Let two sets of bit-vectors 
$\Vec{A}$ and $\Vec{B}$ be given. 
For convenience, we refer to $var(\Vec{A})$ and $var(\Vec{B})$ as the union of variables from each bit-vector in $\Vec{A}$ and $\Vec{B}$, respectively. 

We denote $\sumgtatom{\Vec{A}}{\Vec{B}}$ as the positive predicate atom for $\sum{\Vec{A}} > \sum{\Vec{B}} $, and its monotonic definition contains clauses:
\begin{enumerate}
    \item $\Bar{a} \vee a'$ for every variable $a \in var(\Vec{A})$ and $b\vee \Bar{b'}$ for every variable $b \in var(\Vec{B})$.
    \item the ripple-carry summation network definition for $\Vec{sum_A} = \sum(\Vec{A'})$ and $\Vec{sum_B} = \sum(\Vec{B'})$ where $\Vec{A'}$ and $\Vec{B'}$ are the result of substitutions $a \gets a'$ and $b \gets b'$ in $\Vec{A}$ and $\Vec{B}$, respectively. 
    \item the monotonic definition for $\gtatom{\Vec{sum_A}}{\Vec{sum_b}}$
    \item $\overline{\gtatom{\Vec{sum_A}}{\Vec{sum_b}}} \vee \sumgtatom{\Vec{A}}{\Vec{B}}$
\end{enumerate}

We first, using the monotonic construction shown in Def.~\ref{def:monot}, define $\Vec{A'}$ and $\Vec{B'}$ as the over- and under-approximations of $\Vec{A}$ and $\Vec{B}$, respectively. Then we encode the summations $\Vec{sum_A} = \sum (\Vec{A'})$ and $\Vec{sum_B} = \sum(\Vec{B'})$ using standard ripple-carry summation networks. Finally, we encode the comparison of  $\Vec{sum_A} > \Vec{sum_B}$ using the monotonic definition presented in Sect.~\ref{appendix:bvcomp}. The definition of $\sumgtatom{\Vec{A}}{\Vec{B}}$ is indeed monotonic despite the use of ripple-carry summation networks because the inputs to the networks are the over- and under-approximations $\Vec{A'}$ and $\Vec{B'}$ (instead of $\Vec{A}$ and $\Vec{B}$). The definition is not Horn, and it requires the witness on $\Vec{A'}$ and $\Vec{B'}$ to transform into a proof specific Horn definition of the negative predicate atom $\overline{\sumgtatom{\Vec{A}}{\Vec{B}}}$. The size of the definition is dominated by the size of the adder network, which is $O(|var(\vec{A})| + |var(\vec{B})|)$. 

The monotonic definition for the negative predicate atom $\overline{\sumgtatom{\Vec{A}}{\Vec{B}}}$ contains the following clauses:

\begin{enumerate}
    \item $a \vee \overline{a'}$ for every variable $a \in var(\Vec{A})$ and $\overline{b} \vee b'$ for every variable $b \in var(\Vec{B})$.
    \item the ripple-carry summation network definition for $\Vec{sum_A} = \sum(\Vec{A'})$ and $\sum = \textit{Sum}(\Vec{B'})$ where $\Vec{A'}$ and $\Vec{B'}$ are the result of substitutions $a \gets a'$ and $b \gets b'$ in $\Vec{A}$ and $\Vec{B}$, respectively. 
    \item the monotonic definition for $\overline{\gtatom{\Vec{sum_A}}{\Vec{sum_b}}}$
    \item $\gtatom{\Vec{sum_A}}{\Vec{sum_b}} \vee \overline{\sumgtatom{\Vec{A}}{\Vec{B}}}$
\end{enumerate}

The definition is not Horn, and it requires the witness on $\Vec{A'}$ and $\Vec{B'}$ to transform into a proof specific Horn definition for the positive predicate atom $\sumgtatom{\Vec{A}}{\Vec{B}}$.

\subsection{Maximum Flow}\label{appendix:mfdef}

Given a graph $G = (V, E)$ where $V$ and $E$ are sets of vertices and edges, respectively. Let $Cap$ be the edge capacity function that maps every edge $e$ in $E$ to its capacity represented by the BV $\vec{cap}_{e}$. Suppose $s$, $t$ are two vertices in $V$, and $\vec{z}$ is a BV, the max flow theory contains the predicate $MF_{s}^{t} > \vec{z}$ over the inputs variables $e_1, e_2 \ldots e_n \in E$ and $\vec{cap}_{e_1}, \vec{cap}_{e_2} \ldots \vec{cap}_{e_n} \in Cap$. The predicate holds iff the maximum flow from the source $s$ to the target $t$ is greater than $val(\vec{z})$ by using the edges enabled edges with their capacity defined by $Cap$. 

The monotonic definitions for maximum flow exploits the duality between maximum flow and minimum cut. Namely, the maximum flow exceeds a given threshold holds if there exists a flow assignment that allows the flow to the sink be greater than the threshold. Conversely, the maximum flow cannot exceed a given threshold if we can find a minimum cut where the sum of edge capacities is no more than the threshold.

Let a graph $(V, E)$ and an edge capability function $\capf$ 
(where $\capf(e)$ is a bit-vector $\capatom{e}$ for every $e \in E$) 
be given. 
For convenience, we denote $a \bvand \Vec{b}$ as the result of performing a bit-wise AND between atom $a$ and bit-vector $\vec{b}$. The monotonic definition for the positive predicate atom $\mfatom$ contains the following clauses:

\begin{enumerate}
    \item $\overline{\gtatom{\flowatom{e}}{\atom{e} \bvand \capatom{e} }}$ and its monotonic definition  for every edge $e \in E$ 
    \item $\overline{\sumgtatom{\Vec{Out}_{j}}{\Vec{In}_j}}$ and its monotonic definition 
    where $\Vec{In}_j = \{ \flowatom{i \rightarrow j} \mid (i, j) \in E\}$ and $\Vec{Out}_j = \{ \flowatom{j \rightarrow i} \mid (j, i) \in E\}$ for every $j \in (V \setminus s)$
    \item $\gtatom{\Vec{z}}{\Vec{In}_t} \vee \mfatom$ and the monotonic definition for  $\overline{\gtatom{\Vec{z}}{\Vec{In}_t}}$
\end{enumerate}

The monotonic definition for $\mfatom$ introduces a bit-vector $\flowatom{e}$ for every edge $e$ to indicate the amount of flow passing through the edge. If there is a flow assignment on every edge such that: (1) flows do not exceed any edge's capacity ($\overline{\gtatom{\flowatom{e}}{\capatom{e}}}$), (2) flows do not increase while passing through every vertex ($\overline{\gtatom{\Vec{Out}_{j}}{\Vec{In}_j}}$), and (3) the total amount of incoming flow at the destination reaches the target ($\overline{\gtatom{\Vec{z}}{\Vec{In}_t}}$), then the predicate atom $\mfatom$ must hold. The definition is monotonic because: (1) the edge atom $\atom{e}$ and edge capacity $\capatom{e}$ only appear negatively in the monotonic definition for $\overline{\gtatom{\flowatom{e}}{\atom{e} \bvand \capatom{e}}}$; and (2) the target $\Vec{z}$ only appears positively in the definition for $\overline{\gtatom{\Vec{z}}{\Vec{In}_t}}$. The definition is not Horn, and thus cannot be used directly to prove lemmas in the theory of $\mfatom$. On the other hand, we can construct a Horn upper bound for $\overline{\mfatom}$ from the definition by obtaining a witness assignment on $\flowatom{e}$ for $e \in E$. The size of the monotonic definition is $O(|E|\times k)$ where $k$ is the maximum width of bit-vector for edge capacity.

 The monotonic definition for the negative predicate atom $\overline{\mfatom}$ contains the following clauses:

\begin{enumerate}
    \item $\sumgtatom{\Vec{CapCut}}{\{\Vec{z}\}} \vee \streachatom{Cut} \vee
    \overline{\mfatom}$ \\ where $Cut = \{\eatom{i}{j} \wedge \overline{\incut{i \rightarrow j}}\}$ and $\Vec{CapCut} = \{\eatom{i}{j} \bvand \incut{i \rightarrow j} \bvand \capatom{i \rightarrow j} \mid (i,j) \in E \}$
    \item the monotonic definitions for $\overline{\sumgtatom{\Vec{CapCut}}{\Vec{z}}}$ and $\overline{\streachatom{Cut}}$ 
\end{enumerate}

The monotonic definition for $\overline{\mfatom}$ introduces an auxiliary variable $\incut{e}$ for every edge $e \in E$ indicating if the edge is used in a cut or not. If there exists a cut that separates vertices $s$ and $t$ ($\streachatom{Cut}$) and the sum of all enabled edges in the cut is no more than the target threshold (i.e., $\overline{\sumgtatom{\Vec{CapCut}}{\Vec{z}}}$), then the maximum flow must be no more than $\Vec{z}$ (i.e., $\overline{\mfatom}$). The definition is monotonic since: (1) edge atoms $\atom{e}$ appear only positively in both the definitions 
 for $\overline{\gtatom{\Vec{CapCut}}{\Vec{z}}}$ and $\overline{\streachatom{Cut}}$; (2) capacity atoms $\capatom{e}$ appear positively in the monotonic definition for $\overline{\gtatom{\Vec{CapCut}}{\Vec{z}}}$;  and (3) $\vec{z}$ appear negatively in the monotonic definition for $\overline{\gtatom{\Vec{CapCut}}{\Vec{z}}}$. The definition is not Horn, and thus cannot be used directly to prove lemmas in the theory of $\overline{\mfatom}$. On the other hand, we can construct Horn-upper bound for $\mfatom$ from the definition by obtaining witness assignment on $\incut{e}$ for $e \in E$.

\newpage
\section{Full Experimental Results Table}
\label{appendix:result-table}

MonoProof and Lemma-Specific Bit-Blasting (LSBB) share the same solving
step, so times are reported together.  When Bit-Blasting (BB) timed out
on solving, there is no proof to check, so nothing is reported.
Details of experiments are in Sec.~\ref{sec:evaluation}.
\begin{center}
\begin{tabular}{|l||r|r||r|r|r|r|}
\hline
 & \multicolumn{2}{c||}{} & \multicolumn{4}{c|}{Proof Preparation and} \\
 & \multicolumn{2}{c||}{Solving Time (sec)} & \multicolumn{4}{c|}{Checking Time (sec)} \\
\cline{2-7}
 & & & & & \multicolumn{2}{c|}{MonoProof} \\
\cline{6-7}
 & & & & & & \multicolumn{1}{c|}{w/o} \\
 & & \multicolumn{1}{c||}{MonoProof} & & & & \multicolumn{1}{c|}{Backward} \\
Benchmark & \multicolumn{1}{c|}{BB} & \multicolumn{1}{c||}{and LSBB} & \multicolumn{1}{c|}{BB} & \multicolumn{1}{c|}{LSBB} & \multicolumn{1}{c|}{Full} & \multicolumn{1}{c|}{Check} \\
\hline\hline
\multicolumn{7}{|l|}{Network Reachability} \\
\hline
kf\_5\_5\_0.5 & 30.57 &5.66 & 185.57 & TimeOut & 49.76 & 58.24 \\
\hline
kf\_5\_5\_0.6 & 47.74 & 9.96 & 643.37 & TimeOut & 95.80 & 104.29 \\
\hline
kf\_5\_5\_0.7 & 65.11 & 7.10 & 704.17 & TimeOut & 69.03 & 76.82 \\
\hline
kf\_5\_5\_0.8 & 85.98 & 13.14 & 2273.30 & TimeOut & 73.37 & 126.30 \\
\hline
kf\_5\_5\_0.9 & 126.64 & 15.10 & 3035.49 & TimeOut & 215.64 & 185.06 \\
\hline
kf\_5\_5\_1.0 & 122.23 & 17.41 & TimeOut & TimeOut & 227.08 & 217.78 \\
\hline
kf\_5\_6\_0.5 & 99.43 & 12.41 & 1849.12 & TimeOut & 194.45 & 170.92 \\
\hline
kf\_5\_6\_0.6 & 101.32 & 11.05 & 2645.17 & TimeOut & 124.15 & 126.77 \\
\hline
kf\_5\_6\_0.7 & 134.67 & 18.37 & 2090.57 & TimeOut & 382.40 & 316.13 \\
\hline
kf\_5\_6\_0.8 & 160.57 & 19.16 & 3477.11 & TimeOut & 303.08 & 265.48 \\
\hline
kf\_5\_6\_0.9 & 178.74 & 24.49 & TimeOut & TimeOut & 174.44 & 261.97 \\
\hline
kf\_5\_6\_1.0 & 212.33 & 33.55 & TimeOut & TimeOut & 654.44 & 534.03 \\
\hline
kf\_6\_6\_0.5 & 137.16 & 16.20 & 1622.82 & TimeOut & 356.82 & 292.51 \\
\hline
kf\_6\_6\_0.6 & 182.90 & 21.15 & TimeOut & TimeOut & 369.30 & 357.49 \\
\hline
kf\_6\_6\_0.7 & 248.09 & 27.39 & TimeOut & TimeOut & 513.20 & 498.01 \\
\hline
kf\_6\_6\_0.8 & 247.46 & 29.58 & TimeOut & TimeOut & 667.38 & 567.84 \\
\hline
kf\_6\_6\_0.9 & 393.14 & 38.08 & TimeOut & TimeOut & 1087.86 & 850.24 \\
\hline
kf\_6\_6\_1.0 & 437.03 & 37.04 & TimeOut & TimeOut & 1041.43 & 819.56 \\
\hline
kf\_6\_7\_0.5 & 163.31 & 15.95 & 2613.89 & TimeOut & 142.55 & 208.36 \\
\hline
kf\_6\_7\_0.6 & 232.76 & 28.68 & TimeOut & TimeOut & 652.02 & 554.80 \\
\hline
kf\_6\_7\_0.7 & 235.12 & 31.54 & TimeOut & TimeOut & 277.00 & 372.83 \\
\hline
kf\_6\_7\_0.8 & 285.29 & 34.05 & TimeOut & TimeOut & 1018.16 & 778.58 \\
\hline
kf\_6\_7\_0.9 & 954.43 & 67.85 & TimeOut & TimeOut & 2480.67 & 1861.53 \\
\hline
kf\_6\_7\_1.0 & 678.27 & 85.66 & TimeOut & TimeOut & 3326.10 & 2475.17 \\
\hline
\hline
\end{tabular}

\centerline{(continued on next page)}
\newpage
\centerline{(continued from preceding page)}

\begin{tabular}{|l||r|r||r|r|r|r|}
\hline
 & \multicolumn{2}{c||}{} & \multicolumn{4}{c|}{Proof Preparation and} \\
 & \multicolumn{2}{c||}{Solving Time (sec)} & \multicolumn{4}{c|}{Checking Time (sec)} \\
\cline{2-7}
 & & & & & \multicolumn{2}{c|}{MonoProof} \\
\cline{6-7}
 & & & & & & \multicolumn{1}{c|}{w/o} \\
 & & \multicolumn{1}{c||}{MonoProof} & & & & \multicolumn{1}{c|}{Backward} \\
Benchmark & \multicolumn{1}{c|}{BB} & \multicolumn{1}{c||}{and LSBB} & \multicolumn{1}{c|}{BB} & \multicolumn{1}{c|}{LSBB} & \multicolumn{1}{c|}{Full} & \multicolumn{1}{c|}{Check} \\
\hline\hline
\multicolumn{7}{|l|}{Escape Routing} \\
\hline
through\_ti\_am5716\_1 & TimeOut & 0.98 & NoProof & TimeOut & 4.03 & 2.29 \\
\hline
through\_lattice\_M40\_1020\_1 & TimeOut & 3.31 & NoProof & TimeOut & 13.81 & 6.35\\
\hline
through\_lattice\_M25\_1020\_1 & TimeOut & 3.30 & NoProof & TimeOut & 14.55 & 6.31 \\
\hline
through\_altera\_10ax115\_1152\_1 & TimeOut & 3.56 & NoProof & TimeOut & 17.17 & 7.32 \\
\hline
through\_altera\_10ax066\_1152\_1 & TimeOut & 4.14 & NoProof & TimeOut & 22.40 & 7.36 \\
\hline
through\_ti\_am5718\_1 & TimeOut & 1.31 & NoProof & TimeOut & 8.71 & 2.34 \\
\hline
through\_altera\_10ax048\_780\_1 & TimeOut & 2.37 & NoProof & TimeOut & 12.06 & 5.00 \\
\hline
through\_lattice\_M40\_1152\_1 & TimeOut & 4.17 & NoProof & TimeOut & 9.92 & 22.34 \\
\hline
through\_ti\_amk52e04\_2 & TimeOut & 384.47 & NoProof & TimeOut & 20.22 & 160.61 \\
\hline
through\_ti\_66AK2H14\_1 & TimeOut & 3.70 & NoProof & TimeOut & 16.84 & 9.34 \\
\hline
blind\_lattice\_M80\_1152\_1 & TimeOut & 3.43 & NoProof & TimeOut & 13.34 & 7.38 \\
\hline
blind\_lattice\_M40\_1020\_1 & TimeOut & 3.43 & NoProof & TimeOut & 15.67 & 6.42 \\
\hline
blind\_lattice\_M25\_1020\_1 & TimeOut & 3.37 & NoProof & TimeOut & 14.32 & 6.40 \\
\hline
blind\_xilinx\_flgc2377\_2 & TimeOut & 20.34 & NoProof & TimeOut & 48.21 & 29.31 \\
\hline
blind\_lattice\_M115\_1152\_1 & TimeOut & 3.63 & NoProof & TimeOut & 15.69 & 7.38 \\
\hline
blind\_ti\_amk52e04\_2 & TimeOut & 2474.46 & NoProof & TimeOut & 69.11 & TimeOut
\\
\hline  
blind\_ti\_66AK2H14\_1 & TimeOut & 4.75 & NoProof & TimeOut & 24.56 & 21.35 \\
\hline  
blind\_xilinx\_flgb2892\_2 & TimeOut & 23.01 & NoProof & TimeOut & 57.55 & 38.13 \\
\hline  
blind\_ti\_am5716\_1 & TimeOut & 1.00 & NoProof & TimeOut & 14.90 & 2.26\\
\hline  
blind\_altera\_10ax066\_1152\_1 & TimeOut & 2.99 & NoProof & TimeOut & 16.43 & 7.28 \\
\hline  
blind\_altera\_10ax115\_1152\_1 & TimeOut & 3.09 & NoProof & TimeOut & 16.17& 7.34 \\
\hline  
blind\_lattice\_M40\_1152\_1 & TimeOut & 3.13 & NoProof & TimeOut & 13.82& 7.43 \\
\hline  
blind\_altera\_10ax048\_780\_1 & TimeOut & 2.17 & NoProof & TimeOut & 11.55& 5.08 \\
\hline  
blind\_ti\_am5718\_1 & TimeOut & 1.08 & NoProof & TimeOut & 5.65& 2.28 \\
\hline
\end{tabular}
\end{center}

\end{ExtendedVersion}

\end{document}